\documentclass{sig-alternate}

\usepackage[utf8]{inputenc}
\usepackage{ifpdf}
\usepackage{amsmath}
\usepackage{amssymb}
\usepackage{latexsym}
\usepackage{mathtools}
\usepackage{color}

 \newtheorem{thm}{Theorem}
 \newtheorem{lem}[thm]{Lemma}
 \newtheorem{prop}[thm]{Proposition}
 \newtheorem{cor}[thm]{Corollary}
 \newtheorem{defn}{Definition}



\def\KK{{\mathbb{K}}}

\def\PP{{\mathbb{P}}}
\def\ZZ{{\mathbb{Z}}}
\def\QQ{{\mathbb{Q}}}
\def\NN{{\mathbb{N}}}
\def\UU{{\mathbb{U}}}
\def\Res{{\mathrm{Res}}}
\def\Jac{{\mathrm{Jac}}}

\def\Disc{{\mathrm{Disc}}}
\def\Spec{{\mathrm{Spec}}}

\def\Proj{{\mathrm{Proj}}}

\def\P{{\mathfrak{P}}}

\def\Rc{{\mathcal{R}}}

\def\Dc{{\mathcal{D}}}
\def\Jc{{\mathcal{J}}}

\def\T{{\mathfrak{T}}}

\def\ux{{\underline{x}}}
\def\mm{{\mathfrak{m}}}

\title{Discriminants of complete intersection space curves}

\numberofauthors{2}
\author{
\alignauthor Laurent Bus\'e\\
       \affaddr{Universit\'e C\^ote d'Azur, Inria}\\
       \affaddr{2004 route des Lucioles,}\\
       \affaddr{06 902 Sophia Antipolis, France}\\
       \email{laurent.buse@inria.fr}
\alignauthor Ibrahim Nonkan\'e\\
       \affaddr{Université Ouaga 2, IUFIC,}\\
	   \affaddr{12 BP 417 Ouagadougou 12, Burkina Faso}\\
       \email{inonkane@univ-ouaga2.bf}
}

\begin{document}

\maketitle

\begin{abstract}
In this paper, we develop a new approach to the discriminant of a complete intersection curve in the 3-dimensional projective space. By relying on the resultant theory, we first prove a new formula that allows us to define this discriminant without ambiguity and over any commutative ring, in particular in any characteristic. This formula also provides a new method for evaluating and computing this discriminant efficiently, without the need to introduce new variables as with the well-known Cayley trick. Then, we obtain new properties and computational rules such as the covariance and the invariance formulas. Finally, we show that our definition of the discriminant satisfies to the expected geometric property and hence yields an effective smoothness criterion for complete intersection space curves. Actually, we show that in the generic setting, it is the defining equation of the discriminant scheme if the ground ring is assumed to be a unique factorization domain.
\end{abstract}

\section{Introduction}

Discriminants are central mathematical objects that have applications in many fields. Let $\KK$ be a field and suppose given integers $1\leq c \leq n$ and $1\leq d_1,\ldots,d_c$. Let $S$ be the set of all $c$-uples of homogeneous polynomials $f_1,\ldots,f_c$ in the polynomial ring $\KK[x_1,\ldots,x_n]$ of degree $d_1,\ldots,d_c$ respectively. Consider the subset $D$ of $S$ corresponding to those $c$-uples of homogeneous polynomials that define an algebraic subvariety in $\PP^{n-1}_\KK$ which is not smooth  and of codimension $c$. It is well-known that $D$ is an irreducible hypersurface provided $d_i\geq 2$ for some $i$, or provided $c=n$ (in which case $D$ is nothing but the resultant variety) \cite{GKZ}. The discriminant polynomial is then usually defined as an equation of $D$. It is a homogeneous polynomial in the coefficients of each polynomial $f_i$ whose vanishing provides a smoothness criterion \cite{GKZ,Benoist}.  
This geometric approach to discriminants yields a beautiful theory with many remarkable results (e.g.~\cite{GKZ}). However, whereas there are strong interests in computing with discriminants (e.g.~\cite{Dem,Sase,Nie12,GM15,BK16}), including in the field of number theory, this approach is not tailored to develop the required formalism. For instance, having the discriminant defined up to a nonzero multiplicative constant is an important drawback, especially when computing over fields of positive characteristic. Another point is about the computation of discriminants: it is usually done by means of the famous Cayley trick that requires to introduce new variables, which has a bad effect on the computational cost. 

\smallskip

In some cases, there exist an alternative to the above geometric definition of discriminants. In the case $c=n$, which corresponds to resultants, there is a huge literature where the computational aspects are treated extensively. In particular, a vast formalism is available and many formulas allow to compute resultants, as for instance the well-known Macaulay formula (e.g.~\cite{J91,J97,CLO98}). When $c<n$ the theory becomes much more delicate. Nevertheless, for both cases $c=1$ (hypersurfaces) and $c=n-1$ (finitely many points) discriminants can be defined rigorously and their formalism has been developed. The case $c=1$ goes back to Demazure \cite{Bou, Dem} and the case $c=n-1$ has been initiated by Krull \cite{Krull1,Krull2}. In both cases, the discriminant is defined by means of resultants, via a universal formula. This allows to develop the formalism, to obtain useful computational rules and also to compute it efficiently by taking advantage of the Macaulay formula for resultants; see \cite{BJ14} for more details.

The goal of this paper is to provide a similar treatment in the case $(c,n)=(2,4)$. Our approach relies on the characterization of this discriminant by means of a universal formula where resultants and discriminants of finitely many points appear. As far as we know, this formula is new and provide the first (efficient) method to compute the discriminant of a complete intersection curve over any ring. In particular, we provide a closed formula that allows to compute it as a ratio of determinants. We emphasize that the computations are done in dimension at most 3, that is to say that there is no need to introduce new variables as with the Cayley trick. We mention that the problem of studying and computing discriminants goes back to the remarkable paper \cite{Sylvester} of Sylvester in 1864. The case $(c,n)=(2,4)$ was the last remaining case to complete the picture in $\PP^3$. 

\smallskip

Before going into further details, we provide an example to illustrate the contribution of this paper. 
The Clebsch cubic projective surface  is defined by the homogeneous polynomial
\vspace{-.2em}
$$f_1:=\frac{1}{3}\left(\sum_{i=1}^4 x_i^3 - (\sum_{i=1}^4 x_i)^3 \right)\in \ZZ[x_1,x_2,x_3,x_4].$$
By \cite[Definiton 4.6]{BJ14} and the Macaulay formula, we get
$$3^5 \cdot \Disc(f_1)=\Res(\partial_1f_1,\ldots,\partial_4f_1)=-3^{5}\times 5.$$
Thus, $\Disc(f_1)=-5$ and we recover that the Clebsch surface is smooth except in characteristic 5. 
Now, consider the family of quadratic forms
	$$f_2 := ax_1^2+x_1x_2+x_2^2+x_3^2+x_4^2 \in \ZZ[a][x_1,x_2,x_3,x_4].$$
The formula \eqref{eqn:def} we will prove in this paper allows to compute the discriminant of the intersection curve between the Clebsch surface and these quadratic forms;  we get 
\begin{multline*}
		\Disc(f_1,f_2)=2a \left( 110592{a}^{7}+442944{a}^{6}+1163408{a}^{5}+ \right. \\
\left. 	1303260 {a}^{4}+416575{a}^{3}+238468{a}^{2}-33924a-4448 \right) \\ 
\cdot \left( 
	3456{a}^{5}+8208{a}^{4}+10656{a}^{3}+14069{a}^{2}+11134a+
	3176 \right) ^{2}.
\end{multline*}
In characteristic 5, the Clebsch surface $f_1=0$ is singular at the point $P=(1:1:1:1)$. So, if the surface defined by the equation $f_2=0$ goes through $P$ then their intersection curve will be singular at $P$. In general, this is not the case. Indeed, we have that
\begin{multline*}
	\Disc(f_1,f_2)=2a \left( 2{a}^{7}+4{a}^{6}+3{a}^{5}+3{a}^{2}+a+2 \right)\\
	 \cdot \left( {a}^{5}+3{a}^{4}+{a}^{3}+4{a}^{2}+4a+1 \right) ^{2}  \mod 5.
\end{multline*}
Now, if $a$ is specialized to $5b-4$ then we force the surfaces defined by $f_2=0$ to go through $P$. Applying this specialization  the above formula, we obtain
\begin{multline*}
\Disc(f_1,{f_2}_{|a=5b-4})=2\cdot \left( 5b-4 \right)\cdot 5  \\
	 \cdot \left( 31250{b}^{7}-\ldots \right) 
	 \left( 3125{b}^{5}-\ldots \right)^{2} \mod 5
\end{multline*}
so that this discriminant now vanishes modulo 5 as expected.

\smallskip

The paper is organized as follows. In Section \ref{sec:def} we prove a new formula, based on resultants, that is used to provide a new definition of the discriminant of a complete intersection space curve. Then, in Section \ref{sec:formulas} we give some properties and computational rules of this discriminant by relying on the existing formalism of resultants. Finally, in Section \ref{sec:geom} we show that our definition is correct in the sense that it satisfies to the expected geometric property, in particular it yields a universal and effective smoothness criterion which is valid in arbitrary characteristic. 

In the sequel, we will rely heavily on the theory of resultants and its formalism, including the Macaulay formula. We refer the reader to \cite{J91} and \cite[Chapter 3]{CLO98}. We will also assume some familiarity with the definition of discriminants in the case $c=n-1$ for which we refer the reader to \cite[\S 3.1]{BJ14}. Resultants and discriminants will be denoted by $\Res(-)$ and $\Disc(-)$ respectively.

\section{Definition and formula}\label{sec:def}

Suppose given two positive integers $d_1,d_2$ and consider the generic homogeneous polynomials in the four variables $\ux=(x_1,x_2,x_3,x_4)$ 
$$f_1:=\sum_{|\alpha|=d_1} U_{1,\alpha}x^\alpha, \ \ f_2:=\sum_{|\alpha|=d_2} U_{2,\alpha}x^\alpha.$$
We denote by $A=\ZZ[U_{i,\alpha}; i=1,2, |\alpha|=d_i]$ the universal ring of coefficients and we define the polynomial ring $C=A[\ux]$.
The partial derivative of the polynomial $f_i$ with respect to the variable $x_j$ will be denoted by $\partial_jf_i$. Moreover, given four homogeneous polynomials $p_1,$ $p_2,$ $p_3,$ $p_4$ in the variables $\ux$, the determinant of their Jacobian matrix will be denoted by 
$J(p_1,p_2,p_3,p_4):=\det\left(\partial_jp_i \right)_{i,j=1\ldots,4}.$

\begin{thm}\label{thm:gendef} Using the above notation, assume that $d_1+d_2\geq 3$. Let $l,m,n$ be three linear forms 
$$l(\ux)=\sum_{i=1}^4l_ix_i, \ m(\ux)=\sum_{i=1}^4m_ix_i, \ n(\ux)=\sum_{i=1}^4n_ix_i,$$	
and denote by $A'$ the polynomial ring extension of $A$ with the coefficients $l_i$'s, $m_j$'s and $n_k$'s of the linear forms $l,m,n$. Then, 
there exists a unique polynomial in $A$, denoted by $\Disc(f_1,f_2)$ and called the \emph{universal discriminant} of $f_1$ and $f_2$, which is independent of the coefficients of $l,m,n$ and that satisfies to the following equality in $A'$:
\begin{multline*}
	\Res\left(f_1,f_2, J(f_1,f_2,l,m), J(f_1,f_2,l,n)\right) = \Disc(f_1,f_2)  \\
	\hspace{.7cm} \cdot \Res\left(f_1,J(f_1,l,m,n)),f_2,J(f_2,l,m,n)\right)\Disc\left({f_1},{f_2},l\right).
\end{multline*}
By convention, if $d_j=1$ we set 
$$\Res\left(f_1,J(f_1,l,m,n)),f_2,J(f_2,l,m,n)\right)
=J(f_j,l,m,n)^{D_j}$$
where $D_j=(d_1+d_2-d_j)(d_1+d_2-d_j-1)$.
\end{thm}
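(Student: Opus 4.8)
The plan is to treat the claimed identity as a \emph{defining equation} for $\Disc(f_1,f_2)$ and to establish, in this order, (i) uniqueness, (ii) divisibility of the two right-hand factors into the left-hand resultant, and (iii) independence of the resulting quotient from the coefficients of $l,m,n$. Uniqueness is immediate: since $A'$ is a polynomial ring over $\ZZ$, hence a domain, and since both $\Res(f_1,J(f_1,l,m,n),f_2,J(f_2,l,m,n))$ and $\Disc(f_1,f_2,l)$ are nonzero in $A'$ (for generic coefficients the corresponding schemes in $\PP^3$ are empty, resp.\ reduced), any two elements of $A$ satisfying the identity must coincide. The whole content is therefore to produce one such element.

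For the divisibility I would work in the UFD $A'$ and argue through the geometric characterization of the vanishing of a resultant. First I record two elementary facts obtained by Laplace expansion: $J(f_1,f_2,l,m)$ is a linear combination of the $2\times 2$ minors of the Jacobian of $(f_1,f_2)$, with coefficients the $2\times 2$ minors of $l,m$; and $J(f_i,l,m,n)$ is the derivative of $f_i$ along the vector $v$ spanning the line $\{l=m=n=0\}$. Now consider a configuration on $\{\Disc(f_1,f_2,l)=0\}$: there $V(f_1,f_2,l)$ is non-reduced at some point $P$, so $l,\nabla f_1,\nabla f_2$ are dependent at $P$; hence every $J(f_1,f_2,l,\ast)$ vanishes at $P$, and as $f_1(P)=f_2(P)=0$ the four forms of the left-hand resultant share the zero $P$, forcing it to vanish. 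Likewise, on $\{\Res(f_1,J(f_1,l,m,n),f_2,J(f_2,l,m,n))=0\}$ there is a point $P$ of the curve at which both $\nabla f_1,\nabla f_2$ are orthogonal to $v$; then $\nabla f_1,\nabla f_2,l,m,n$ all lie in the $3$-dimensional space $v^{\perp}$, so any four of them are dependent and again $J(f_1,f_2,l,m)(P)=J(f_1,f_2,l,n)(P)=0$ with $f_1(P)=f_2(P)=0$. In both cases the left-hand resultant vanishes along the whole zero locus of the factor, which yields the divisibility once the factors are known to be reduced; the two factors are moreover coprime, since only the first involves the coefficients of $m$ and $n$.

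Independence from $l,m,n$ I would then obtain from a pure multidegree count, which simultaneously certifies that each factor occurs to the first power. Using the homogeneity of the resultant of four forms of degrees $d_1,d_2,d_1+d_2-2,d_1+d_2-2$ together with the bilinearity of $J(f_1,f_2,l,m)$ in $(l,m)$, the left-hand side has degree $d_1d_2(d_1+d_2-2)$ in the coefficients of $m$, the same in those of $n$, and $2d_1d_2(d_1+d_2-2)$ in those of $l$. On the right-hand side, $\Res(f_1,J(f_1,l,m,n),f_2,J(f_2,l,m,n))$ has degree $d_1d_2(d_1+d_2-2)$ in each of $l,m,n$, while $\Disc(f_1,f_2,l)$ is independent of $m,n$ and, by the degree of the $c=n-1$ discriminant recalled from \cite{BJ14}, has degree $d_1d_2(d_1+d_2-2)$ in $l$. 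Hence the product has exactly the same multidegree in $(l,m,n)$ as the left-hand side, so the quotient $Q$ of the left-hand resultant by $\Res(\cdots)\,\Disc(f_1,f_2,l)$, which is a polynomial in $A'$ by step (ii), has multidegree $(0,0,0)$ in $l,m,n$ and therefore lies in $A$. Setting $\Disc(f_1,f_2):=Q$ proves existence, and the degenerate convention for $d_j=1$ is read off from the corresponding degeneration of the factor $\Res(f_1,J(f_1,l,m,n),f_2,J(f_2,l,m,n))$.

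The main obstacle is the rigorous divisibility of step (ii): the forms $J(f_1,f_2,l,m)$ and $J(f_i,l,m,n)$ are very special, so the standard irreducibility results for resultants of \emph{generic} forms do not apply directly, and promoting the set-theoretic vanishing above to an honest divisibility in $A'$ requires knowing that each factor is a product of primes occurring to the right power. I would handle this either by proving irreducibility of each factor in $A'$ and invoking that the left-hand resultant vanishes at the generic point of its zero locus, or---more robustly---by deriving the factorization algebraically from the multiplicativity and base-change formulas for the resultant: expanding $J(f_1,f_2,l,\ast)$ via Euler's identity modulo $(f_1,f_2)$ should exhibit $\Res(f_1,J(f_1,l,m,n),f_2,J(f_2,l,m,n))$ and $\Disc(f_1,f_2,l)$ as explicit factors, which would yield the divisibility, the quotient, and the $d_j=1$ convention in one stroke, with the degree count above serving as the final consistency check.
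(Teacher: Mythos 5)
Your skeleton --- prove that each of the two right-hand factors divides the left-hand resultant, check that they are coprime, then force independence of $l,m,n$ by a multidegree count --- is the same as the paper's, and your multidegree computation is correct (it is exactly the paper's closing argument). But the two steps carrying the real content have genuine gaps. Write $\Rc:=\Res(f_1,f_2,J(f_1,f_2,l,m),J(f_1,f_2,l,n))$ and $\Rc_0:=\Res(f_1,J(f_1,l,m,n),f_2,J(f_2,l,m,n))$. The first gap is your coprimality claim: ``only the first involves the coefficients of $m$ and $n$'' proves nothing, since a polynomial independent of $m,n$ can perfectly well divide one that depends on them (as $x$ divides $xy$). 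What must be excluded is precisely that the irreducible polynomial $\Disc(f_1,f_2,l)$ is a factor of $\Rc_0$, and this needs a real argument. The paper gets it by specializing $f_1,f_2$ to products of generic linear forms and computing the full factorization of $\Rc_0$ under that specialization (formula \eqref{spelinformR0}): every irreducible factor there depends on all three of $l,m,n$ or on none of them, whereas a factor $\Disc(f_1,f_2,l)$ would survive as a nonzero factor depending on $l$ but not on $m,n$.

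The second gap is the one you flag yourself but do not close. Your geometric argument yields only a set-theoretic inclusion of zero loci, hence $\Rc_0\mid \Rc^N$ for some $N$; to conclude $\Rc_0\mid\Rc$ you need $\Rc_0$ to be squarefree. That is neither proved nor safe to hope for: $\Rc_0$ is a resultant of highly non-generic forms, its specialization \eqref{spelinformR0} has irreducible factors raised to the powers $d_2(d_2-1)$, $d_1(d_1-1)$ and $4$, and in the degenerate case $d_j=1$ the theorem's own convention makes $\Rc_0$ the perfect power $J(f_j,l,m,n)^{D_j}$. So your first fallback (irreducibility of each factor) is a dead end, and your second (an Euler-identity expansion) is only a gesture toward what is actually required. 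The paper avoids any reducedness statement: for $\Rc_0$ it uses Cramer's rule to show that $\Delta\cdot J(f_1,f_2,l,m)$ and $\Delta\cdot J(f_1,f_2,l,n)$ lie in the ideal $\left(J(f_1,l,m,n),J(f_2,l,m,n)\right)$, where $\Delta$ is a $3$-minor of the coefficient matrix of $l,m,n$; the divisibility property of resultants then gives $\Rc_0\mid\Delta^{r}\Rc$, and $\Delta$ is removed because it is prime and, again by \eqref{spelinformR0}, does not divide $\Rc_0$. Likewise, for the factor $\Disc(f_1,f_2,l)$ your route can be completed (irreducibility is known from the cited work of Bus\'e--Jouanolou, but you would still need the Nullstellensatz over an algebraically closed field plus a primitivity argument to descend to $\ZZ$), whereas the paper argues directly and characteristic-freely: $\Disc(f_1,f_2,l)$ generates the ideal of inertia forms of $\Dc=(f_1,f_2,\textrm{3-minors of }\Jac(f_1,f_2,l))$, and $\Rc$ is such an inertia form because $J(f_1,f_2,l,m)$ and $J(f_1,f_2,l,n)$ belong to $\Dc$, so the divisibility holds at once in $A'$.
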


Given a commutative ring $R$ and two homogeneous polynomials 
$$g_1:=\sum_{|\alpha|=d_1} u_{1,\alpha}x^\alpha, \ \ g_2:=\sum_{|\alpha|=d_2} u_{2,\alpha}x^\alpha$$
in $R[\ux]$ of degree $d_1$, $d_2$ respectively, the map of rings $\rho$ from $A[\ux]$ to $R[\ux]$ which sends $U_{i,\alpha}$ to $u_{i,\alpha}$ and leave each variable $x_i$ invariant, is called the \emph{specialization map} of the universal polynomials $f_1,f_2$ to the polynomials $g_1,g_2$, as  $\rho(f_i)=g_i$.

\begin{defn}\label{def:disc} Suppose given a commutative ring $R$, two positive integers $d_1,d_2$ such that $d_1+d_2\geq 3$ and two homogeneous polynomials $g_1,g_2$ in $R[\ux]$ of degree $d_1,d_2$ respectively. Denoting by $\rho$ the specialization map as above, we define the discriminant of the polynomials $g_1,g_2$ as
	$$\Disc(g_1,g_2)=\Disc(\rho(f_1),\rho(f_2)):=\rho\left(\Disc(f_1,f_2) \right) \in R.$$
\end{defn}

\begin{proof}[of Theorem \ref{thm:gendef}] To prove the claimed formula, one can assume that $A'$ is the universal ring of the coefficients of the polynomials $f_1,f_2,l,m,n$ over the integers.
		
Our \emph{first step} is to show that $\Disc(f_1,f_2,l)$ divides	
$$\Rc:=\Res(f_1,f_2,J(f_1,f_2,l,m), J(f_1,f_2,l,n)).$$
For that purpose, denote by $\Dc$ the ideal of $A[\ux]$ generated by $f_1, f_2$ and all the 3-minors of the Jacobian matrix of the polynomials $f_1,f_2,l$. We also define the ideal $\mm=(\ux)$ and we recall from \cite[Theorem 3.23]{BJ14} that $\Disc(f_1,f_2,l)$ is a generator of the ideal of inertia forms of $\Dc$, i.e.~the ideal 
$$(\Dc:\mm^\infty)\cap A=\{p\in A \textrm{ such that } \exists \nu\in \NN : \mm^\nu\cdot p\subset \Dc \}.$$
Now, from the similar characterization of the resultant by means of inertia forms \cite[Proposition 2.3]{J91}, we deduce that there exists an integer $N$ such that
\begin{multline*}
	\mm^N\cdot \Rc  \subset \left(f_1,f_2,J(f_1,f_2,l,m), J(f_1,f_2,l,n)\right) \subset A[\ux].
\end{multline*}
But $J(f_1,f_2,l,m)$ and $J(f_1,f_2,l,n)$ belong to $\Dc$, so we deduce that 
$\Rc \in (\Dc:\mm^\infty)\cap A.$
It follows that $\Rc$ is an inertia form of $\Dc$ and it is hence divisible by $\Disc(f_1,f_2,l)$.

Our \emph{second step} is to prove that the resultant
$$\Rc_0:=\Res\left(f_1,J(f_1,l,m,n),f_2, J(f_2,l,m,n)\right)$$
 divides $\Rc$. 	
For all $i=1,\ldots,4$, we obviously have that
 $$ \det \left(\begin{array}{ccccc}
\partial_i f_1  & \partial_1 f_1 & \partial_2 f_1 & \partial_3 f_1 & \partial_4 f_1\\
\partial_i f_2  & \partial_1 f_2 & \partial_2 f_2 & \partial_3 f_2 & \partial_4 f_2\\
 l_i & l_1& l_2 & l_3 & l_4 \\
m_i & m_1 & m_2 & m_3 & m_4 \\
n_i & n_1 & n_2 & n_3 & n_4
\end{array}\right) = 0.$$
By developing each of these determinants with respect to their first column, we get the linear system
\begin{multline*}	
\left(\begin{array}{ccc}
l_1 & m_1 & n_1 \\
l_2 & m_2 & n_2 \\
l_3 & m_3 & n_3 \\
l_4 & m_4 & n_4
\end{array}\right)
\left(\begin{array}{c}
	J(f_1,f_2,m,n) \\
	-J(f_1,f_2,l,n) \\
	J(f_1,f_2,l,m)
\end{array}\right)= \\
\left(
\begin{array}{c}
\partial_1f_2 \\
\partial_2f_2 \\
\partial_3f_2 \\
\partial_4f_2
\end{array}
\right)J(f_1,l,m,n)
- \left(
\begin{array}{c}
	\partial_1f_1 \\
	\partial_2f_1 \\
	\partial_3f_1 \\
	\partial_4f_1
\end{array}
\right)J(f_2,l,m,n).
\end{multline*}
The matrix of this linear system is nothing but the transpose of the Jacobian matrix of the polynomials $l,m,n$. Denote by $\Delta$ any of its 3-minor. Then, Cramer's rules show that  
both polynomials $\Delta\cdot J(f_1,f_2,l,m)$ and $\Delta \cdot J(f_1,f_2,l,n)$ belong to the ideal generated by the polynomials $J(f_1,l,m,n)$ and $J(f_2,l,m,n)$. Therefore, the divisibility property of resultants \cite[\S 5.6]{J91} implies that
$\Rc_0$ divides 
\begin{multline*}
\Res(f_1,f_2,\Delta J(f_1,f_2,l,m),\Delta J(f_1,f_2,l,n)) =
\Delta^{r}\cdot \Rc,	
\end{multline*}
where $r=2d_1d_2(d_1+d_2-2)$; observe that $\Delta$ is independent of $\ux$. As it is well-known, $\Delta$ is an irreducible polynomial, being the determinant of a matrix of indeterminates. Therefore, to conclude this second step we have to show that $\Delta$ does not divide  $\Rc_0$. For that purpose, we consider the specialization $\eta$ of the coefficients of $f_1$ and $f_2$ so that 
\begin{equation}\label{eq:eta}
\eta(f_1)=\prod_{i=1}^{d_1} p_i(\ux), \ \ \eta(f_2)=\prod_{i=1}^{d_2} q_i(\ux),	
\end{equation}
where the $p_i$'s and $q_j$'s are generic linear forms; we add their coefficients as new variables to $A'$. Using the multiplicativity property of resultants, a straightforward computation yields the following irreducible factorization formula
\begin{multline}\label{spelinformR0}
\eta(\Rc_0)=\left(\prod_{i=1}^{d_1} J(p_i,l,m,n) \right)^{d_2(d_2-1)} \\ 
\cdot \left(\prod_{i=1}^{d_2} J(q_i,l,m,n) \right)^{d_1(d_1-1)} \cdot \prod_{i,j,r,s} \Res(p_i,p_j,q_r,q_s)^4
\end{multline}
where the last product runs over the integers $i,j=1,\ldots,d_1$, with $i<j$ and $r,s=1,\ldots,d_2$ with $r<s$.
Since $\eta(\Delta)=\Delta$ and $\Delta$ is not a factor in the above formula, we deduce that $\Delta$ does not divide $\Rc_0$.

The \emph{third step} in this proof is to show that the discriminant $\Dc_\infty:=\Disc(f_1,f_2,l)$ and the resultant $\Rc_0$ are coprime polynomials in $A'$. Since $\Dc_\infty$ is irreducible \cite[Theorem 3.23]{BJ14}), we have to show that it does not divide $\Rc_0$. Consider again the specialization $\eta$ given by \eqref{eq:eta} and assume that $\Dc_\infty$ is a factor in $\Rc_0$. Then, since $\Dc_\infty$ is independent on the coefficients of the linear forms $m$ and $n$, $\eta(\Rc_0)$ must contain some factors that depend on the coefficient of $l$ but not on $m$ and $n$. However, the decomposition formula \eqref{spelinformR0} shows that $\eta(\Rc_0)$ contains only irreducible factors that do depend on three linear forms $l,m,n$, or on none of them. Therefore, we deduce that  $\Dc_\infty$ does not divide $\Rc_0$.

To conclude this proof, we observe that the previous results show that $\Dc_\infty \Rc_0$ 
divides $\Rc$. Moreover, straightforward computations shows that $\Dc_\infty \Rc_0$ and $\Rc$ are both homogeneous polynomials with respect to the coefficients of $l$ of the same degree, and the same happens to be true with respect to the coefficients of $m$ and $n$.
\end{proof}

To compute the discriminant it is much more efficient to specialize the formula in Theorem \ref{thm:gendef} by giving to the linear forms $l,m,n$ some specific values, for instance a single variable. Consider
the Jacobian matrix associated to the polynomials $f_1,f_2$ 
$$\Jac(f_1,f_2):=
\left(
\begin{array}{cccc}
	\partial_1 f_1 & \partial_2 f_1 & \partial_3 f_1 & \partial_4 f_1 \\
	\partial_1 f_2 & \partial_2 f_2 & \partial_3 f_2 & \partial_4 f_2	
\end{array}
\right)
$$
and its minors that we will denote by
\begin{equation}\label{eq:jacminors}
J_{i,j}(f_1,f_2):=
\left|
\begin{array}{cc}
	\partial_i f_1 & \partial_j f_1 \\
	\partial_i f_2 & \partial_j f_2 	
\end{array}
\right|.	
\end{equation}
In the sequel, given a (homogeneous) polynomial $p(\ux)$, for all $j=1,\ldots,4$ we will denote by $\overline{p}^j$ the  polynomial $p$ in which the variable $x_j$ is set to zero. 
\begin{cor}\label{cor:def} Suppose given a commutative ring $R$, two positive integers $d_1,d_2$ such that $d_1+d_2\geq 3$ and two homogeneous polynomials $g_1,g_2$ in $R[\ux]$ of degree $d_1,d_2$ respectively. Then, 
	\begin{multline}\label{eqn:def}
	\Res(g_1,g_2,J_{1,2}(g_1,g_2),J_{2,3}(g_1,g_2))= 	(-1)^{d_1d_2} 	 \\ 
\cdot \Disc(g_1,g_2)  \Res(g_1,\partial_2g_1,g_2,\partial_2g_2)\Disc\left(\overline{g_1}^4,\overline{g_2}^4\right).
	\end{multline}
\end{cor}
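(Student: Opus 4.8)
The plan is to derive the corollary by specializing the universal identity of Theorem \ref{thm:gendef} at a convenient choice of the three linear forms, and then to transport the resulting identity to arbitrary $g_1,g_2$ through the specialization map $\rho$. Since Theorem \ref{thm:gendef} guarantees that $\Disc(f_1,f_2)$ is independent of the coefficients of $l,m,n$, the displayed equality stays valid in $A'$ after substituting any numerical values for these coefficients. I would set $l=x_4$, $m=x_3$ and $n=x_1$: this is precisely the assignment that converts the Jacobian terms of the theorem into the minors $J_{i,j}$ and the partial derivatives occurring in \eqref{eqn:def}.

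With this choice the first step is to evaluate the Jacobian determinants by Laplace expansion along the rows contributed by the coordinate forms. A direct computation gives $J(f_1,f_2,x_4,x_3)=-J_{1,2}(f_1,f_2)$, $J(f_1,f_2,x_4,x_1)=-J_{2,3}(f_1,f_2)$ and $J(f_i,x_4,x_3,x_1)=\partial_2 f_i$ for $i=1,2$, the signs being read off from the permutation of the columns selected by the coordinate rows. Substituting these into the left-hand side of the theorem and into the factor $\Res(f_1,J(f_1,l,m,n),f_2,J(f_2,l,m,n))$ already produces, up to signs, the left-hand side of \eqref{eqn:def} and the factor $\Res(f_1,\partial_2 f_1,f_2,\partial_2 f_2)$.

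I would then normalize the signs using the multihomogeneity of the resultant: replacing an argument by its negative multiplies the resultant by $(-1)$ raised to the product of the degrees of the remaining arguments. Replacing each of the two minors $J_{1,2}$ and $J_{2,3}$ by its negative contributes a factor $(-1)^{d_1 d_2 (d_1+d_2-2)}$, and these two identical factors cancel; likewise the two derivatives $\partial_2 f_i$ enter with a plus sign. Hence all the sign contributions coming from the Jacobians and from the resultant scalings combine to $+1$, so the sign $(-1)^{d_1 d_2}$ in \eqref{eqn:def} must be supplied entirely by the remaining factor.

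That remaining factor is $\Disc(f_1,f_2,l)=\Disc(f_1,f_2,x_4)$, and the heart of the proof is the hyperplane-restriction formula
$$\Disc(f_1,f_2,x_4)=(-1)^{d_1 d_2}\,\Disc(\overline{f_1}^4,\overline{f_2}^4),$$
which says that the discriminant of three forms in $\PP^3$ one of which cuts out the hyperplane $x_4=0$ coincides, up to this sign, with the discriminant of the two restricted forms in $\PP^2=\{x_4=0\}$. I would establish this from the $c=n-1$ formalism of \cite[\S 3.1]{BJ14}, where $\Disc(f_1,f_2,x_4)$ is described via resultants and the inertia-form characterization: both sides vanish exactly when $\overline{f_1}^4,\overline{f_2}^4$ fail to meet transversally in $\PP^2$, so they agree up to a unit, and the real content is to pin down that this unit is precisely $(-1)^{d_1 d_2}$. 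Getting this restriction identity with its exact sign is the step I expect to be the main obstacle, the rest being routine bookkeeping. Finally, all the identities above are polynomial identities in $A'$, so applying the ring homomorphism $\rho$ (and invoking the stated convention when $d_1=1$ or $d_2=1$, in which case $\partial_2 f_j$ replaces the corresponding resultant) yields \eqref{eqn:def} for $g_1,g_2$ over the arbitrary ring $R$.
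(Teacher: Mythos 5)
Your proposal is correct and is essentially the paper's own proof: the paper likewise specializes Theorem~\ref{thm:gendef} at $l=x_4$, $m=x_3$, $n=x_1$, with the Jacobian evaluations and sign cancellations working out exactly as in your bookkeeping. The hyperplane-restriction identity $\Disc(f_1,f_2,x_4)=(-1)^{d_1d_2}\Disc\left(\overline{f_1}^4,\overline{f_2}^4\right)$ that you single out as the main obstacle is not re-derived in the paper either; it is quoted directly as \cite[Proposition 3.13]{BJ14}, so you may simply cite that result instead of reconstructing it.
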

\begin{proof} Straightforward by applying the formula in Theorem \ref{thm:gendef} with $l=x_4$, $m=x_3$, and $n=x_1$. We notice that
	\begin{equation}\label{eq:reductionvar}
	\Disc(f_1,f_2,x_4)=(-1)^{d_1d_2}\Disc\left(\overline{f}_1^4,\overline{f}_2^4\right)		
	\end{equation}
by property of the discriminant of three homogeneous polynomials in four variables \cite[Proposition 3.13]{BJ14}.
\end{proof}

From a computational point of view the above formula allows to compute the discriminant of any couple of homogeneous polynomials $g_1,g_2 \in R[\ux]$ as a ratio of determinants since all the other terms in \eqref{eq:reductionvar} can be expressed as ratio of determinants by means of the Macaulay formula. There is no need to introduce new variables as in the Cayley trick and the formula is universal in the coefficients of the polynomials over the integers.

\section{Properties and computational rules}\label{sec:formulas}

In this section, we provide some properties and computational rules of the discriminant $\Disc(f_1,f_2)$ as defined in the previous section. In particular, we give precise formulas regarding the covariance and invariance properties. We also provide a detailed computation of a particular class of complete intersection curves in order to illustrate how our formalism allows to handle the discriminant and simplify its computation and evaluation over any ring of coefficients. In what follows, $R$ denotes a commutative ring.

\subsection{First elementary properties}

From Theorem \ref{thm:gendef}, it is clear that the discriminant $\Disc(f_1,f_2)$ is homogeneous with respect to the coefficients of $f_1$, respectively $f_2$ and that these degrees can easily be computed. As expected, we recover the degrees of the usual geometric definition of discriminant (see \cite{Sylvester,SylvesterBis,Benoist}).

\begin{prop}[Homogeneity]\label{prop:homdeg} The universal discriminant is homogeneous of degree $\delta_i$ with respect to the coefficient of $f_i$ where, setting  $e_1=d_1-1$ and $e_2=d_2-1$, 
	$$\delta_1=d_2(3e_1^2+2e_1e_2+e_2^2), \ \  \delta_2=d_1(3e_2^2+2e_1e_2+e_1^2).$$
\end{prop}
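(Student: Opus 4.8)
The plan is to read off the bidegree of $\Disc(f_1,f_2)$ directly from the defining formula in Theorem~\ref{thm:gendef} by tracking the homogeneity degrees of each resultant and discriminant appearing in it, viewed as a polynomial in the coefficients of $f_1$ and, separately, in the coefficients of $f_2$. The key input is the standard fact that the resultant $\Res(p_1,\ldots,p_4)$ of four homogeneous polynomials in four variables, of degrees $\delta_1,\ldots,\delta_4$, is homogeneous in the coefficients of $p_k$ of degree $\prod_{j\neq k}\delta_j$; and the analogous degree formula for the discriminant of three forms in four variables, which is recorded in \cite{BJ14}. Since the claimed identity
$$
\Rc = \Disc(f_1,f_2)\cdot \Rc_0 \cdot \Disc(f_1,f_2,l)
$$
holds in $A'$, I would compute the degree of the left-hand side $\Rc$ and of the two known factors $\Rc_0$ and $\Disc(f_1,f_2,l)$ in the coefficients of $f_i$, and then solve for $\delta_i=\deg_{f_i}\Rc-\deg_{f_i}\Rc_0-\deg_{f_i}\Disc(f_1,f_2,l)$.

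Concretely, I would first record the degrees of the entries. The Jacobian minors $J(f_1,f_2,l,m)$ and $J(f_1,f_2,l,n)$ are each homogeneous in the coefficients of $f_1$ of degree $1$ (one row is $\partial f_1$), homogeneous in the coefficients of $f_2$ of degree $1$, and of degree $d_1+d_2-2$ in $\ux$. Hence in $\Rc=\Res(f_1,f_2,J(f_1,f_2,l,m),J(f_1,f_2,l,n))$, whose four arguments have $\ux$-degrees $d_1,d_2,d_1+d_2-2,d_1+d_2-2$, the degree in the coefficients of $f_1$ is obtained by the product-over-other-factors rule, remembering that $f_1$'s coefficients also occur (with degree $1$ each) inside the two Jacobian factors; a careful bookkeeping of these contributions gives $\deg_{f_1}\Rc$, and symmetrically $\deg_{f_2}\Rc$. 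Next, $\Rc_0=\Res(f_1,J(f_1,l,m,n),f_2,J(f_2,l,m,n))$ has arguments of $\ux$-degrees $d_1,d_1-1,d_2,d_2-1$, with the coefficients of $f_1$ appearing in the first and second arguments and those of $f_2$ in the third and fourth, yielding $\deg_{f_1}\Rc_0$ and $\deg_{f_2}\Rc_0$. Finally the degrees of $\Disc(f_1,f_2,l)$ in the coefficients of $f_1$ and $f_2$ are taken from the known formula for the discriminant of three forms in $\PP^3$ in \cite{BJ14}.

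With these ingredients the proof is purely a matter of subtracting. Writing $e_1=d_1-1$, $e_2=d_2-1$ throughout keeps the arithmetic compact, and I expect the result to collapse to the stated symmetric expressions $\delta_1=d_2(3e_1^2+2e_1e_2+e_2^2)$ and $\delta_2=d_1(3e_2^2+2e_1e_2+e_1^2)$, which are manifestly exchanged by swapping $(d_1,e_1)\leftrightarrow(d_2,e_2)$, as the symmetry of the construction demands. I would also sanity-check the formulas against the special case $d_1=d_2$, where $\delta_1=\delta_2$ must hold, and against the convention for $d_j=1$ stated in the theorem, where $\Rc_0$ degenerates to a power of a single Jacobian and the degree count must remain consistent.

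The main obstacle I anticipate is the careful accounting in the first step: because $f_1$'s coefficients enter $\Rc$ both as the first argument of the resultant \emph{and} linearly inside each of the two Jacobian arguments, the naive ``product of the other degrees'' rule must be corrected by a Leibniz-type sum over the three arguments in which $f_1$ appears. Getting these cross-terms right — and likewise verifying that the degree contributions from the auxiliary linear forms $l,m,n$ genuinely cancel out (consistent with $\Disc(f_1,f_2)$ being independent of them, as established in Theorem~\ref{thm:gendef}) — is the one place where an error could creep in; everything after that is mechanical.
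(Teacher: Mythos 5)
Your proposal is correct and follows essentially the same route as the paper's own (very terse) proof: the paper also reads the bidegree off the defining identity of Theorem~\ref{thm:gendef}, invoking the known multi-degrees of resultants \cite[Proposition 2.3]{J91} and of discriminants of finitely many points \cite[Proposition 3.9]{BJ14}, and subtracting the degrees of $\Rc_0$ and $\Disc(f_1,f_2,l)$ from that of $\Rc$. Your bookkeeping is sound — in particular the cross-terms you flag (the coefficients of $f_i$ entering both as an argument and linearly in the Jacobian arguments) are exactly what makes the computation collapse to $\delta_1=d_2(3e_1^2+2e_1e_2+e_2^2)$ and its symmetric counterpart.
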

\begin{proof} This is a straightforward computation from the defining equality (see Theorem \ref{thm:gendef}), since the degrees of resultants and discriminants of finitely many points are known (see \cite[Proposition 2.3]{J91} and \cite[Proposition 3.9]{BJ14}).
\end{proof}

\begin{prop}[Permutation of the polynomials]
Let $g_1,g_2 \in R[x_1,\ldots,x_4]$ be two homogeneous polynomials of degree $d_1$ and $d_2$ respectively, then
$$  \Disc\left({g_2},{g_1}\right)   =   \Disc\left({g_1}, {g_2}\right).$$	
\end{prop}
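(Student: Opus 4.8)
The plan is to exploit the defining identity of Theorem~\ref{thm:gendef} and to compare it with the same identity after exchanging the roles of $f_1$ and $f_2$. By Definition~\ref{def:disc} it suffices to treat the universal polynomials $f_1,f_2$, so I would place myself in the integral domain $A'$ and fix one generic triple of linear forms $l,m,n$. Writing $\Rc=\Res(f_1,f_2,J(f_1,f_2,l,m),J(f_1,f_2,l,n))$, $\Rc_0=\Res(f_1,J(f_1,l,m,n),f_2,J(f_2,l,m,n))$ and $\Dc_\infty=\Disc(f_1,f_2,l)$, Theorem~\ref{thm:gendef} reads $\Rc=\Disc(f_1,f_2)\,\Rc_0\,\Dc_\infty$, and the corresponding identity for the swapped pair reads $\Rc'=\Disc(f_2,f_1)\,\Rc_0'\,\Dc_\infty'$, where a prime denotes the same quantity with $f_1$ and $f_2$ interchanged. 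Since $\Rc_0$ and $\Dc_\infty$ are nonzero in the domain $A'$ (a resultant and an irreducible discriminant of generic data, see \cite[Theorem 3.23]{BJ14}), it is enough to prove the three equalities $\Rc'=\Rc$, $\Rc_0'=\Rc_0$ and $\Dc_\infty'=\Dc_\infty$; cancelling $\Rc_0\Dc_\infty$ then yields $\Disc(f_2,f_1)=\Disc(f_1,f_2)$ in $A$, and specializing through $\rho$ gives the statement over an arbitrary commutative ring $R$.

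The core of the argument is the sign bookkeeping for the two resultant factors, using the transposition and multiplicativity (scaling) rules for resultants \cite{J91}. Set $e=d_1+d_2-2$. Interchanging $f_1$ and $f_2$ in $\Rc$ has two effects: each of the two Jacobians $J(f_1,f_2,l,m)$ and $J(f_1,f_2,l,n)$ changes sign, since it amounts to a single row transposition in the Jacobian matrix, and the first two arguments of the resultant get permuted. Factoring the two signs out of the resultant contributes $(-1)^{d_1d_2e}(-1)^{d_1d_2e}=1$, while the transposition of the first two arguments contributes $(-1)^{d_1d_2e^2}$, so that $\Rc'=(-1)^{d_1d_2e^2}\Rc$. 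A short parity check shows this exponent is always even: if $d_1$ or $d_2$ is even then $d_1d_2$ is even, and if both are odd then $e$ is even; hence $\Rc'=\Rc$. For $\Rc_0$, interchanging $f_1$ and $f_2$ simply relabels the four arguments by the permutation $(1\,3)(2\,4)$ without any row transposition in the Jacobians, and since this permutation is even the two transposition signs cancel, giving $\Rc_0'=\Rc_0$.

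It remains to establish $\Dc_\infty'=\Dc_\infty$, that is, that the discriminant of finitely many points $\Disc(f_1,f_2,l)$ is invariant under the transposition of $f_1$ and $f_2$. This is where I would invoke the existing formalism for the case $c=n-1$: the permutation rule for the discriminant of finitely many points from \cite[\S 3.1]{BJ14}, specialized to the degree pattern $(d_1,d_2,1)$, should produce exactly the sign $+1$. I expect this last step to be the main obstacle, since it requires pinning down the permutation sign of the discriminant of points precisely rather than merely up to a unit, and checking that it collapses to $+1$ for the relevant degrees; the analogous (and reassuring) phenomenon is that the sign $(-1)^{d_1d_2e^2}$ attached to $\Rc$ already vanishes. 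Once the three equalities are in hand, the cancellation of $\Rc_0\Dc_\infty$ described above completes the proof.
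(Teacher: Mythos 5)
Your proof is correct and is essentially the paper's own argument: the paper's one-line proof invokes exactly the permutation rule for resultants \cite[\S 5.8]{J91} and for discriminants of finitely many points \cite[Proposition 3.12 i)]{BJ14}, which is precisely the sign bookkeeping you carry out on the defining identity of Theorem~\ref{thm:gendef} before cancelling $\Rc_0\Dc_\infty$ in the domain $A'$. The step you flagged as a possible obstacle is in fact immediate: \cite[Proposition 3.12 i)]{BJ14} gives $\Disc(f_2,f_1,l)=\Disc(f_1,f_2,l)$ with no sign, so your argument closes without further work.
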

\begin{proof} This is a straightforward consequence of the similar property for resultants \cite[\S 5.8]{J91} and discriminants of finitely many points 	\cite[Proposition 3.12 i)]{BJ14}.
\end{proof}

\begin{prop}[Elementary transformations]\label{prop:elemtransform}
Let $g_1,g_2,h_1,h_2$ be four homogeneous polynomials in $R[\ux]$ of degree $d_1,d_2,d_1-d_2,d_2-d_1$ respectively. Then, 
$$\Disc\left(g_1,g_2+h_2g_1\right)   = \Disc\left(g_1+h_1g_2,g_2\right)   =   \Disc\left(g_1,g_2\right).$$ 	
\end{prop}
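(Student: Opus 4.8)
The plan is to reduce to a single transformation and then exploit the defining equality of Theorem~\ref{thm:gendef}, tracking how each of its three factors behaves. By the permutation property established just above, it suffices to prove that $\Disc(g_1,g_2+h_2g_1)=\Disc(g_1,g_2)$; the second equality then follows by applying this to the pair $(g_2,g_1)$ together with $h_1$ (of degree $d_1-d_2$) and using $\Disc(g_2,g_1)=\Disc(g_1,g_2)$. If $d_2<d_1$ the form $h_2$ is forced to be zero and there is nothing to prove, so I may assume $\deg h_2=d_2-d_1\geq 0$. Moreover, by the specialization principle (Definition~\ref{def:disc}) it is enough to treat the universal situation where $g_1=f_1$, $g_2=f_2$ and the coefficients of $h_2$ are new indeterminates adjoined to the universal domain $A'$; since $A'$ is an integral domain, I will then be free to cancel nonzero factors and specialize at the end.

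Writing $\tilde g_2:=g_2+h_2g_1$, the first step is to record the effect of the substitution on the Jacobians. Expanding $\partial_j\tilde g_2=\partial_jg_2+(\partial_jh_2)g_1+h_2\,\partial_jg_1$ and using multilinearity of the determinant in the corresponding row, the term carrying $h_2$ reproduces the $g_1$-row and hence yields a vanishing determinant; this gives
\begin{align*}
J(g_1,\tilde g_2,l,m)&=J(g_1,g_2,l,m)+g_1\,J(g_1,h_2,l,m),\\
J(g_1,\tilde g_2,l,n)&=J(g_1,g_2,l,n)+g_1\,J(g_1,h_2,l,n),
\end{align*}
whereas for the Jacobian not involving $g_1$ one gets the full expansion
$$J(\tilde g_2,l,m,n)=J(g_2,l,m,n)+g_1\,J(h_2,l,m,n)+h_2\,J(g_1,l,m,n).$$
A short degree count using $\deg h_2=d_2-d_1$ shows that each added term is a homogeneous multiple, of exactly the right degree, of either $g_1$ or $J(g_1,l,m,n)$.

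The second step is to feed these identities into the three factors of Theorem~\ref{thm:gendef}. On the left-hand side, the substitution changes the second argument of $\Rc=\Res(g_1,g_2,J(g_1,g_2,l,m),J(g_1,g_2,l,n))$ by the multiple $h_2g_1$ and the third and fourth arguments by multiples of $g_1$; as all these corrections are multiples of the first argument $g_1$, the invariance of the resultant under elementary transformations \cite{J91} gives $\tilde\Rc=\Rc$. The same mechanism applies to $\Rc_0=\Res(g_1,J(g_1,l,m,n),g_2,J(g_2,l,m,n))$: the third argument changes by a multiple of the first, and the fourth argument by a multiple of the first and of the second, so $\tilde\Rc_0=\Rc_0$. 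Finally, the factor $\Disc(g_1,g_2,l)$ is a discriminant of three forms in four variables, and passing from $g_2$ to $\tilde g_2$ is precisely an elementary transformation of such a tuple; its invariance is the analogous elementary-transformation property for the codimension-$(n-1)$ discriminant proved in \cite{BJ14}.

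Combining these, the defining equality in the transformed and original forms reads
$$\Rc=\Disc(g_1,\tilde g_2)\cdot\Rc_0\cdot\Disc(g_1,g_2,l)=\Disc(g_1,g_2)\cdot\Rc_0\cdot\Disc(g_1,g_2,l),$$
and since $\Rc_0$ and $\Disc(g_1,g_2,l)$ are nonzero in the domain $A'$ (indeed irreducible, as shown in the proof of Theorem~\ref{thm:gendef}), cancellation yields $\Disc(g_1,\tilde g_2)=\Disc(g_1,g_2)$; specializing recovers the statement over an arbitrary $R$. I expect the only delicate points to be the systematic degree bookkeeping ensuring that every resultant row operation is legitimate, and the degenerate case $d_j=1$, where the factor $\Rc_0$ is replaced by a power of a Jacobian per the convention of Theorem~\ref{thm:gendef}; there the very same elementary-transformation identities apply directly to that Jacobian, so the argument goes through unchanged.
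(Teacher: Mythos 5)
Your proof is correct and follows the same route as the paper: the paper's proof is a one-line appeal to the invariance of resultants under elementary transformations \cite[\S 5.9]{J91} and of discriminants of finitely many points \cite[Proposition 3.12]{BJ14}, applied to the three factors of the defining equality of Theorem~\ref{thm:gendef}. Your write-up simply supplies the details the paper calls ``straightforward'' --- the multilinearity expansion of the Jacobians, the cancellation in the universal domain, and the degenerate case $d_j=1$ --- all of which check out.
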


\begin{proof}
This is a straightforward consequence of the invariance of resultants under elementary transformations \cite[\S 5.9]{J91} and the invariance of discriminants of finitely many points under elementary transformations \cite[Proposition 3.12]{BJ14}.	
\end{proof}

\subsection{Covariance and invariance} 

In this section, we give precise statements about two important properties of the discriminant: its geometric covariance and its geometric invariance under linear change of  variables. 

\begin{prop}[Covariance]\label{prop:covariance}
Suppose given two homogeneous polynomials $g_1,g_2$ in $R[\ux]$ of the same degree $d\geq 2$ and a square matrix $\varphi=(u_{i,j})_{i,j=1,2}$
with coefficients in $R$, then 
\begin{multline*}
\Disc(u_{1,1}g_1+u_{1,2}g_2,u_{2,1}g_1+u_{2,2}g_2)=\\
\det(\varphi)^{6d(d-1)^2}\Disc(g_1,g_2).
\end{multline*}	
\end{prop}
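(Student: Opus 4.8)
The plan is to read off the covariance from the defining equality of Theorem~\ref{thm:gendef} by tracking how each of its three ``known'' factors transforms under the substitution $f_i \mapsto h_i := u_{i,1}f_1 + u_{i,2}f_2$. I would work universally: take $f_1,f_2$ generic of degree $d$ and treat the entries $u_{i,j}$ of $\varphi$ as new indeterminates, so that $\det(\varphi)$ is a nonzerodivisor and division is legitimate; the statement for an arbitrary ring $R$ then follows by specialization (Definition~\ref{def:disc}). Writing the formula of Theorem~\ref{thm:gendef} once for $(f_1,f_2)$ and once for $(h_1,h_2)$, it suffices to compute the exact power of $\det(\varphi)$ absorbed by $\Rc$, by $\Rc_0$ and by $\Dc_\infty=\Disc(f_1,f_2,l)$, and then to divide the two equalities. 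The starting point is two multilinearity identities for the Jacobian determinant: expanding along the two rows coming from $h_1,h_2$ gives $J(h_1,h_2,l,m)=\det(\varphi)\,J(f_1,f_2,l,m)$ (and likewise with $n$), while linearity in the single first row gives $J(h_i,l,m,n)=u_{i,1}J(f_1,l,m,n)+u_{i,2}J(f_2,l,m,n)$, so that the pair $\bigl(J(h_1,l,m,n),J(h_2,l,m,n)\bigr)$ is the $\varphi$-image of $\bigl(J(f_1,l,m,n),J(f_2,l,m,n)\bigr)$.

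For the two resultants I would use the covariance properties of $\Res$ from \cite[\S5]{J91}: scaling one argument by $\lambda$ multiplies the resultant by $\lambda$ raised to the product of the degrees of the other arguments, and replacing a pair of arguments of common degree $e$ by $\varphi$-combinations of them multiplies it by $\det(\varphi)^{e\cdot p}$, where $p$ is the product of the degrees of the two untouched arguments (this last exponent being forced by the irreducibility of $\Res$, by the fact that the quotient is a multiplicative character in $\varphi$ and hence a power of $\det(\varphi)$, and by the scalar case $\varphi=tI$). Applied to $\Rc$, scaling the two Jacobian minors of degree $2(d-1)$ by $\det(\varphi)$ yields $\det(\varphi)^{4d^2(d-1)}$ and recombining $f_1,f_2$ yields $\det(\varphi)^{4d(d-1)^2}$, i.e. altogether $\det(\varphi)^{4d(d-1)(2d-1)}$. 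Applied to $\Rc_0$, both the degree-$d$ pair $(f_1,f_2)$ and the degree-$(d-1)$ pair of Jacobians are recombined by the same $\varphi$, giving $\det(\varphi)^{d^2(d-1)+d(d-1)^2}=\det(\varphi)^{d(d-1)(2d-1)}$; the reorderings of the four arguments needed to group the two pairs introduce a sign depending only on the degrees $d$ and $d-1$, which occurs identically for $f$ and for $h$ and therefore cancels in the ratio.

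The main obstacle is the transformation of the lower discriminant $\Dc_\infty=\Disc(f_1,f_2,l)$. Here I would invoke that $\Dc_\infty$ is irreducible (\cite[Theorem~3.23]{BJ14}) together with the geometric observation that for invertible $\varphi$ the ideals $(h_1,h_2)$ and $(f_1,f_2)$ coincide, so $V(h_1,h_2,l)=V(f_1,f_2,l)$ and $\Disc(h_1,h_2,l)$ must vanish whenever $\Disc(f_1,f_2,l)$ does; by irreducibility $\Dc_\infty$ divides its transform, and the quotient, being a multiplicative character in $\varphi$ that is independent of the coefficients of $f_1,f_2,l$, must equal $\det(\varphi)^{c}$. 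The exponent is $c=3d(d-1)$, namely the degree of $\Disc(f_1,f_2,l)$ in the coefficients of $f_1$, which I would obtain from \cite[Proposition~3.9]{BJ14} and \eqref{eq:reductionvar}, or self-containedly by matching the $f_1$-degrees in the defining equality of Theorem~\ref{thm:gendef} using Proposition~\ref{prop:homdeg} and the degree formula for resultants.

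Finally, dividing the equality of Theorem~\ref{thm:gendef} for $(h_1,h_2)$ by the one for $(f_1,f_2)$ and collecting the three exponents gives
\[ \Disc(h_1,h_2)=\det(\varphi)^{\,4d(d-1)(2d-1)-d(d-1)(2d-1)-3d(d-1)}\,\Disc(f_1,f_2)=\det(\varphi)^{6d(d-1)^2}\,\Disc(f_1,f_2), \]
which specializes to the claimed identity over any ring $R$. As a consistency check that I would record, the scalar case $\varphi=tI$ forces the exponent to equal $\tfrac12(\delta_1+\delta_2)$ with $\delta_1,\delta_2$ as in Proposition~\ref{prop:homdeg}, that is exactly $6d(d-1)^2$, confirming both the final exponent and the intermediate bookkeeping.
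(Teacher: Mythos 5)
Your proposal is correct, and its skeleton is the same as the paper's: write the defining equality of Theorem~\ref{thm:gendef} for the transformed pair, track the power of $\det(\varphi)$ absorbed by each of the three known factors, and divide. Your exponent bookkeeping ($4d^2(d-1)$ from homogeneity of the resultant, $4d(d-1)^2$ and $d(d-1)(2d-1)$ from covariance of resultants, $3d(d-1)$ from the lower discriminant, total $6d(d-1)^2$) matches the paper's line by line. The one genuine difference is the treatment of the lower discriminant. The paper works with the specialized formula \eqref{eqn:def}, so that this factor becomes $\Disc\left(\overline{f}_1^4,\overline{f}_2^4\right)$, a discriminant of two ternary forms, and its covariance with exponent $3d(d-1)$ is simply quoted from \cite[Proposition~3.18]{BJ14}. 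You instead keep generic $l,m,n$ and re-derive the covariance of $\Disc(f_1,f_2,l)$ from scratch, via irreducibility, the equality of ideals $(h_1,h_2,l)=(f_1,f_2,l)$, and a character argument. That route can be completed, but it is the only place where your sketch leaves real work: the divisibility step needs the geometric vanishing criterion of \cite{BJ14} over an algebraically closed field, the Nullstellensatz, and attention to the locus $\det(\varphi)\neq 0$; the assertion that the quotient is independent of the coefficients is not automatic but follows from a degree count (both $\Disc(h_1,h_2,l)$ and $\Disc(f_1,f_2,l)$ are homogeneous of the same total degree $6d(d-1)$ in the coefficients of $f_1,f_2$ taken jointly, and of the same degree in those of $l$); and pinning the quotient to exactly $\det(\varphi)^{3d(d-1)}$ over $\ZZ$, not merely up to a unit, uses multiplicativity together with the normalization that the quotient equals $1$ at $\varphi=\mathrm{id}$. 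The paper's citation avoids all of this, which is what it buys; your version buys self-containedness, and your closing consistency check against $\tfrac{1}{2}(\delta_1+\delta_2)$ from Proposition~\ref{prop:homdeg} is a worthwhile addition either way.
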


\begin{proof} By definition, it is sufficient to prove this formula in the universal setting. For simplicity, we use the formula \eqref{eqn:def}.  
Setting $\tilde{f}_1:=u_{1,1}f_1+u_{1,2}f_2$ and $\tilde{f}_2=u_{2,1}f_1+u_{2,2}f_2$, we observe that
$J_{k,l}(\tilde{f}_1,\tilde{f}_2)=\det(\varphi)J_{k,l}(f_1,f_2)$
so that 
\begin{multline*}
\Res(\tilde{f}_1,\tilde{f}_2,J_{1,2}(\tilde{f}_1,\tilde{f}_2),J_{2,3}(\tilde{f}_1,\tilde{f}_2)) \\ =	\det(\varphi)^{4d^2(d-1)}\Res(\tilde{f}_1,\tilde{f}_2,J_{1,2},J_{2,3}).
\end{multline*}
In addition, by the covariance of resultants \cite[\S 5.11]{J91}, 
$$\Res(\tilde{f}_1,\tilde{f}_2,J_{1,2},J_{2,3})=\det(\varphi)^{4d(d-1)^2}\Res(f_1,f_2,J_{1,2},J_{2,3})$$
so that we deduce that
\begin{multline*}
\Res(\tilde{f}_1,\tilde{f}_2,J_{1,2}(\tilde{f}_1,\tilde{f}_2),J_{2,3}(\tilde{f}_1,\tilde{f}_2)) \\ =	\det(\varphi)^{4d(d-1)(2d-1)}\Res(f_1,f_2,J_{1,2},J_{2,3}).
\end{multline*}
The covariance of resultants also shows that
\begin{multline*}
\Res(\tilde{f}_1,\partial_2\tilde{f_1},\tilde{f}_2,\partial_2\tilde{f_2})\\
=\det(\varphi)^{d(d-1)(2d-1)}\Res(f_1,\partial_2f_1,f_2,\partial_2f_2)
\end{multline*}
and the covariance property of discriminants of finitely many points \cite[Proposition 3.18]{BJ14} yields
$$\Disc\left(\overline{\tilde{f}_1}^4,\overline{\tilde{f}_2}^4\right)=\det(\varphi)^{3d(d-1)}\Disc(\overline{f}_1^4,\overline{f}_2^4).$$
From all these equalities and \eqref{eqn:def}, we deduce the claimed formula.
\end{proof}

\begin{prop}[Invariance]\label{prop:changecoord}
Let $g_1,g_2$ be two homogeneous polynomials in $R[\ux]$ of degree $d_i \geq 2$ and let  
$\varphi = \left(c_{i,j}\right)_{1\leqslant i,j  \leqslant 4 }$
be a square matrix with entries in $R$.
For all homogeneous polynomial $g\in R[\ux]$ we set 
$$ g\circ \varphi (x_1,x_2,x_3,x_4):= g\left( \sum_{j=1}^4c_{1,j}x_j, \ldots, \sum_{j=1}^4c_{4,j}x_j \right).$$
Then, we have that
$$\Disc(g_1\circ \varphi,g_2\circ \varphi)
= \det(\varphi)^{D}\Disc(g_1,g_2)		
$$	
where $D=d_1d_2\left((e_1+e_2)^2-e_1e_2\right)$, $e_i=d_i-1$, $i=1,2$.
\end{prop}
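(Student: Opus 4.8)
The plan is to reduce the invariance statement to the covariance-style bookkeeping we already carried out, by tracking how each of the four factors in formula \eqref{eqn:def} transforms under the substitution $g_i \mapsto g_i\circ\varphi$. As in the proof of Proposition \ref{prop:covariance}, it suffices to establish the identity in the universal setting, so I would work with the generic polynomials $f_1,f_2$ and the specialization map afterwards. The key algebraic observation is the chain rule: if $\tilde{f}_i := f_i\circ\varphi$, then $\partial_j \tilde{f}_i = \sum_{k} c_{k,j}\,(\partial_k f_i)\circ\varphi$, which in matrix form says that the Jacobian matrix of $\tilde{f}_1,\tilde{f}_2$ equals $(\Jac(f_1,f_2)\circ\varphi)\cdot\varphi^{t}$. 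From this I would read off how the Jacobian minors $J_{i,j}(\tilde{f}_1,\tilde{f}_2)$ relate to the $J_{k,l}(f_1,f_2)\circ\varphi$ via the $2\times 2$ minors of $\varphi^t$, which is the main source of extra $\det(\varphi)$ powers.

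First I would handle the resultant $\Res(\tilde f_1,\tilde f_2, J_{1,2}(\tilde f_1,\tilde f_2), J_{2,3}(\tilde f_1,\tilde f_2))$. Here two effects combine: the covariance of resultants under $g\mapsto g\circ\varphi$ \cite[\S 5.11]{J91} contributes a power of $\det(\varphi)$ governed by the product of the four degrees, and the fact that each $J_{i,j}(\tilde f_1,\tilde f_2)$ is not simply $J_{i,j}(f_1,f_2)\circ\varphi$ but a $\det(\varphi^t)$-weighted combination of such minors contributes further factors. I expect the cleanest route is to first rewrite the two Jacobian minors that appear as combinations of the $J_{k,l}\circ\varphi$, then invoke the multilinearity/covariance of resultants in those two slots; keeping careful track of degrees (the entries $J_{i,j}$ have degree $d_1+d_2-2=e_1+e_2$) is essential. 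Next I would treat $\Res(\tilde f_1,\partial_2\tilde f_1,\tilde f_2,\partial_2\tilde f_2)$ similarly, using the chain rule for $\partial_2$ and the covariance of resultants, and then the term $\Disc(\overline{\tilde f_1}^4,\overline{\tilde f_2}^4)$ via the invariance property of discriminants of finitely many points under linear changes \cite[Proposition 3.17]{BJ14}, being careful that setting $x_4=0$ and composing with $\varphi$ do not commute in general, so the induced change of variables on three variables must be identified correctly.

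Assembling the exponents from \eqref{eqn:def}, the target power $D = d_1d_2\bigl((e_1+e_2)^2 - e_1e_2\bigr)$ should emerge as the left-hand exponent minus the exponents picked up by the two auxiliary factors $\Res(f_1,\partial_2 f_1,f_2,\partial_2 f_2)$ and $\Disc(\overline{f_1}^4,\overline{f_2}^4)$; I would cross-check this against the homogeneity degrees $\delta_1,\delta_2$ from Proposition \ref{prop:homdeg} as an independent consistency test, since the total degree in the coefficients forces the $\det(\varphi)$-exponent to be consistent. The main obstacle I anticipate is the bookkeeping for the last factor: the discriminant of finitely many points is taken after the reduction $x_4\mapsto 0$, and the substitution $\varphi$ mixes $x_4$ with the other variables, so $\overline{(g_i\circ\varphi)}^4$ is not of the form $(\overline{g_i}^4)\circ(\text{some }3\times 3\text{ matrix})$ in an obvious way. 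I would resolve this either by choosing $\varphi$ generic and arguing that both sides are polynomials in the $c_{i,j}$ that must agree on a dense set once the generic case is settled, or — more robustly — by appealing directly to the intrinsic invariance already encoded in the universal formula of Theorem \ref{thm:gendef}, where the choice of linear forms $l,m,n$ is free, and transporting that freedom through $\varphi$ to absorb the mismatch.
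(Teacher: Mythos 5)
Your primary route through \eqref{eqn:def} has a genuine gap at the step where you invoke ``multilinearity/covariance of resultants in those two slots.'' Resultants are not multilinear in their polynomial entries. Writing $\tilde f_i:=f_i\circ\varphi$, the chain rule gives $\partial_j\tilde f_i=\sum_k c_{k,j}(\partial_kf_i)\circ\varphi$, so by Cauchy--Binet each minor $J_{i,j}(\tilde f_1,\tilde f_2)$ is a \emph{sum} of six terms of the form $\pm(\textrm{2-minor of }\varphi)\cdot J_{k,l}(f_1,f_2)\circ\varphi$, and there is no computational rule expressing $\Res(\tilde f_1,\tilde f_2,\cdot,\cdot)$ of such sums in terms of the resultants of the individual $J_{k,l}\circ\varphi$. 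The covariance property \cite[\S 5.11]{J91} used in Proposition \ref{prop:covariance} applies when the entry polynomials of a resultant are replaced by a square-matrix linear combination \emph{of themselves}; that is not the situation here. The same obstruction hits $\Res(\tilde f_1,\partial_2\tilde f_1,\tilde f_2,\partial_2\tilde f_2)$, since $\partial_2\tilde f_i$ is again a sum, not a single composed polynomial. Finally, for $\Disc(\overline{\tilde f_1}^4,\overline{\tilde f_2}^4)$ you correctly spot that reduction modulo $x_4$ and composition with $\varphi$ do not commute, but neither of your fixes repairs this: $\overline{g\circ\varphi}^4$ is $g$ composed with a $4\times 3$ substitution, to which the invariance of discriminants of finitely many points does not apply, and passing to generic $\varphi$ does not remove this structural mismatch---the identity still has to be proved for generic $\varphi$, where the mismatch persists.

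Your closing suggestion, however, is exactly the paper's proof, and it must be promoted from a patch for the last factor to the whole argument, because it is what rescues all three factors simultaneously. Apply Theorem \ref{thm:gendef} to $f_1\circ\varphi,f_2\circ\varphi$ choosing as linear forms $l\circ\varphi,m\circ\varphi,n\circ\varphi$ (which are again linear forms). Then every Jacobian determinant that appears has \emph{all four} of its arguments composed with $\varphi$, so the chain rule gives the exact identity $J(p_1\circ\varphi,p_2\circ\varphi,p_3\circ\varphi,p_4\circ\varphi)=J(p_1,p_2,p_3,p_4)\circ\varphi\cdot\det(\varphi)$ with no Cauchy--Binet sums, and no reduction modulo $x_4$ ever enters. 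The two resultants then transform by the invariance of resultants under composition \cite[\S 5.13]{J91} together with their homogeneity (to extract the $\det(\varphi)$ factors pulled out of the Jacobians), the three-polynomial discriminant transforms by \cite[Proposition 3.27]{BJ14} as $\Disc(f_1\circ\varphi,f_2\circ\varphi,l\circ\varphi)=\det(\varphi)^{d_1d_2(e_1+e_2)}\Disc(f_1,f_2,l)$, and comparing the resulting identity with that of Theorem \ref{thm:gendef} for $f_1,f_2,l,m,n$ (cancellation being legitimate since the universal ring $A'$ is a domain) isolates $\Disc(f_1\circ\varphi,f_2\circ\varphi)=\det(\varphi)^D\Disc(f_1,f_2)$. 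Your instinct to cross-check exponents is worth keeping: the composition contributes $d_1d_2(e_1+e_2)^2$ on the left and $d_1d_2e_1e_2$ on the right, the extracted $\det(\varphi)$ factors contribute $2d_1d_2(e_1+e_2)$ to \emph{each} side and cancel, and the three-polynomial discriminant contributes $d_1d_2(e_1+e_2)$, which indeed yields $D=d_1d_2\left((e_1+e_2)^2-e_1e_2\right)$ after the $(e_1+e_2)$-terms are balanced.
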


	\begin{proof} As always, to prove this formula we may assume that we are in the universal setting, $f_1$ and $f_2$ being the universal homogeneous polynomials of degree $d_1$ and $d_2$ respectively. We will also denote by $l,m,n$ three generic linear form and by $\phi$ the generic square matrix of size $4$.
		
		Applying Theorem \ref{thm:gendef}, we get the equality 
\begin{multline}\label{eq:maincompose}
\Res(f_1\circ \varphi,f_2\circ \varphi,J(f_1\circ \varphi,f_2\circ \varphi, l\circ \varphi, m\circ \varphi),\\
\hspace{4.3cm} J(f_1\circ \varphi,f_2\circ \varphi, l\circ \varphi, n\circ \varphi))\\
= \Disc(f_1\circ \varphi,f_2\circ \varphi) \Disc\left(f_1\circ \varphi,f_2\circ\varphi, l\circ \varphi \right) \Res(f_1\circ \varphi,\\
\hspace{.7cm}J(f_1\circ \varphi, l\circ \varphi, m\circ \varphi, n\circ \varphi), f_2 \circ \varphi,J(f_2\circ \varphi,l\circ \varphi,m\circ \varphi,n\circ \varphi))	
\end{multline}
(observe that $l\circ\varphi, m\circ\varphi, n\circ\varphi$ are all linear forms in $\ux$).
Now, by \cite[Proposition 3.27]{BJ14}, we know that 
$$
\Disc\left(f_1\circ \varphi,f_2\circ\varphi, l\circ \varphi \right)
= \det(\varphi)^{d_1d_2(e_1+e_2)}\Disc(f_1,f_2,l).	
$$
 Also, by the chain rule formula for the derivative of the composition of functions, we have the formulas 
\begin{align*}
J(f_1\circ \varphi, f_2\circ \varphi, l\circ \varphi, m\circ \varphi))&= J(f_1,f_2,l,m) \circ [\varphi]\cdot \det(\varphi)	\\
J(f_1\circ \varphi, f_2\circ \varphi, l\circ \varphi, n\circ \varphi))&= J(f_1,f_2,l,n) \circ [\varphi]\cdot \det(\varphi)	\\
J(f_i\circ \varphi, l\circ \varphi, m\circ \varphi, n\circ \varphi))&= J(f_i,l,m,n) \circ [\varphi]\cdot \det(\varphi)	
\end{align*}
from we deduce, using the invariance of resultants \cite[\S 5.13]{J91} and their homogeneity, that
\begin{multline*}
 \Res(f_1\circ \varphi,J(f_1\circ \varphi, l\circ \varphi, m\circ \varphi, n\circ \varphi),f_2 \circ \varphi, \\
 \hspace{3.5cm}J(f_2\circ \varphi,l\circ \varphi,m\circ \varphi,n\circ \varphi))\\
 =\det(\varphi)^{d_1d_2e_1e_2} \Res(f_1,J(f_1,l,m,n),f_2,J(f_2,l,m,n))
\end{multline*}
and
\begin{multline*}
 \Res(f_1\circ \varphi,f_2\circ \varphi,J(f_1\circ \varphi,f_2\circ \varphi, l\circ \varphi, m\circ \varphi),\\
 \hspace{4cm} J(f_1\circ \varphi,f_2\circ \varphi, l\circ \varphi, n\circ \varphi))\\
 =\det(\varphi)^{d_1d_2(e_1+e_2)^2}  \Res(f_1,f_2, J(f_1,f_2,l,m), J(f_1,f_2,l,n)).
\end{multline*}
From here, the claimed formula follows from the substitution of the above equalities in \eqref{eq:maincompose} and the comparison with the formula given in Theorem \ref{thm:gendef}. 
 \end{proof}
 
 \begin{cor} The discriminant is invariant under permutation of the variables $\ux$.
 \end{cor}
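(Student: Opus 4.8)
The plan is to realize a permutation of the variables as a special case of the linear change of coordinates treated in Proposition~\ref{prop:changecoord}, and then to verify that the resulting power of the determinant is trivial. First I would fix a permutation $\sigma$ of $\{1,2,3,4\}$ and let $\varphi=(c_{i,j})$ be the associated permutation matrix, so that $g\circ\varphi$ is precisely $g$ with its variables permuted according to $\sigma$. Since $\varphi$ is a permutation matrix, $\det(\varphi)=\mathrm{sgn}(\sigma)\in\{-1,+1\}$.

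Applying Proposition~\ref{prop:changecoord} then yields
\[
\Disc(g_1\circ\varphi,g_2\circ\varphi)=\det(\varphi)^{D}\,\Disc(g_1,g_2),\qquad D=d_1d_2\bigl((e_1+e_2)^2-e_1e_2\bigr),
\]
with $e_i=d_i-1$. As $\det(\varphi)=\pm1$, the claimed invariance is equivalent to the assertion that $D$ is even, which reduces the corollary to an elementary parity computation.

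The key step is therefore this parity check. Expanding, $(e_1+e_2)^2-e_1e_2=e_1^2+e_1e_2+e_2^2$, so $D=d_1d_2\,(e_1^2+e_1e_2+e_2^2)$. I would argue by cases on the parities of $d_1,d_2$. If at least one of $d_1,d_2$ is even, then the factor $d_1d_2$ is even and we are done. If both $d_1,d_2$ are odd, then $e_1,e_2$ are both even, whence each of $e_1^2$, $e_1e_2$, $e_2^2$ is even and the factor $e_1^2+e_1e_2+e_2^2$ is even. In every case $D$ is even, so $\det(\varphi)^{D}=1$ and the equality $\Disc(g_1\circ\varphi,g_2\circ\varphi)=\Disc(g_1,g_2)$ follows for every permutation matrix $\varphi$.

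I do not anticipate a genuine obstacle here: the only point requiring care is the parity of $D$, since without it Proposition~\ref{prop:changecoord} yields invariance only up to sign. Should one wish to cover the degenerate cases in which some $d_i=1$, these fall outside the scope of Proposition~\ref{prop:changecoord} and would be treated separately; but for the range $d_1,d_2\geq2$ in which the invariance property is stated, the argument above is complete.
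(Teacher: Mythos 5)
Your proof is correct and follows the same route as the paper: the paper's proof also reduces the corollary to Proposition~\ref{prop:changecoord} together with the observation that $D$ is even. You merely make explicit the parity check (via the two cases on the parities of $d_1,d_2$) that the paper leaves to the reader.
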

 \begin{proof} It follows from Proposition \ref{prop:changecoord} since $D$ is even. 
 \end{proof}

\subsection{Discriminant of a plane curve}

Given a plane curve, we prove that its discriminant as defined in Section \ref{sec:def}, is compatible with its discriminant as a plane hypersurface \cite[\S 4.2]{BJ14}. 

\begin{lem}\label{lem:redvar}
 Let $g$ be a homogeneous polynomial in $R[\ux]$ of degree $d\geq 2$. Then, for all $i=1,\ldots,4$ we have that
$$\Disc(g,x_i)=\Disc\left(\overline{g}^i\right).$$	
\end{lem}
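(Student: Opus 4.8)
The plan is to reduce the four-variable discriminant $\Disc(g,x_i)$ to the three-variable discriminant $\Disc(\overline{g}^i)$ by exploiting the special form of the second polynomial. The key observation is that $x_i$ is a linear form, so we are in exactly the degenerate situation already addressed by the theory developed earlier in the excerpt. First I would invoke Corollary~\ref{cor:def} together with the convention stated in Theorem~\ref{thm:gendef}: since $d_2=\deg(x_i)=1$, the resultant factor $\Res(g,J(g,l,m,n),x_i,J(x_i,l,m,n))$ collapses to a power of a single Jacobian, namely $J(x_i,l,m,n)^{D}$ with an explicit exponent, and this term is essentially trivial to control. In fact, by permutation invariance of the variables (the Corollary following Proposition~\ref{prop:changecoord}), it suffices to treat a single convenient case, say $i=4$, and then transport the result to arbitrary $i$.

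The main technical input I would use is the reduction formula already recorded as equation~\eqref{eq:reductionvar}, namely $\Disc(f_1,f_2,x_4)=(-1)^{d_1d_2}\Disc(\overline{f}_1^4,\overline{f}_2^4)$, which is the four-variable analogue of what we want but for the auxiliary linear form. Setting $g_2=x_4$ means $\overline{g_2}^4=\overline{x_4}^4=0$, so the whole computation specializes to a degenerate configuration where one of the two polynomials is a coordinate. The strategy is therefore to specialize the defining equality of Theorem~\ref{thm:gendef} (or its corollary) with $g_2=x_i$, identify how each factor on the right-hand side simplifies, and match the surviving term against the known plane-hypersurface discriminant $\Disc(\overline{g}^i)$ from \cite[\S 4.2]{BJ14}. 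Concretely, with $g_2=x_i$ the Jacobian minors $J_{k,l}(g,x_i)$ reduce to single partial derivatives of $g$ (up to sign), and the resultant $\Res(g,x_i,\ldots)$ can be evaluated by eliminating the variable $x_i$, which is precisely the operation $g\mapsto\overline{g}^i$.

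I would organize the proof as follows. Begin by reducing to $i=4$ via the permutation-invariance corollary. Then apply Theorem~\ref{thm:gendef} with $f_2=x_4$, choosing the auxiliary linear forms $l,m,n$ conveniently (for instance among the remaining coordinates $x_1,x_2,x_3$) so that the Jacobians involving $x_4$ and these linear forms become constant or vanish, and track the resulting sign and exponent bookkeeping. The discriminant-of-three-forms factor $\Disc(g,x_4,l)$ reduces, again by a coordinate specialization of the $c=n-1$ theory in \cite{BJ14}, to a two-variable discriminant that one recognizes inside $\Disc(\overline{g}^4)$. Collecting the surviving factors and comparing degrees and signs yields the identity.

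The hard part will be the careful bookkeeping of signs and exponents rather than any conceptual obstacle: one must verify that the powers of the Jacobian factor $J(x_4,l,m,n)^{D}$ and the various $(-1)$ contributions from reordering arguments and from the convention $D_j=(d_1+d_2-d_j)(d_1+d_2-d_j-1)$ combine to leave exactly $\Disc(\overline{g}^4)$ with coefficient $1$. I expect that choosing $l,m,n$ so that the degenerate Jacobians are as simple as possible (ideally making the auxiliary resultant equal to a unit) is the decisive simplification, and that the cleanest route is to specialize $g_2=x_4$ directly in the already-simplified formula~\eqref{eqn:def} of Corollary~\ref{cor:def} and then read off the plane discriminant via the standard elimination-of-one-variable description of $\Res$.
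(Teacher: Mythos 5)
You have the right starting point---apply Theorem~\ref{thm:gendef} to the pair $(g,x_i)$ with the degree-one convention, choosing the auxiliary linear forms among the coordinates complementary to $x_i$---and that is indeed how the paper proceeds. But the route you ultimately recommend as ``cleanest'', namely specializing $g_2=x_4$ directly in formula \eqref{eqn:def} of Corollary~\ref{cor:def}, fails. In that corollary the linear forms have already been fixed to $l=x_4$, $m=x_3$, $n=x_1$, so taking $g_2=x_4$ makes $g_2$ coincide with $l$: the minors $J_{1,2}(g,x_4)$ and $J_{2,3}(g,x_4)$ involve only $\partial_1,\partial_2,\partial_3$ and therefore vanish identically, so the left-hand side is $\Res(g,x_4,0,0)=0$; on the right-hand side, $\Res(g,\partial_2 g,x_4,\partial_2 x_4)=\Res(g,\partial_2 g,x_4,0)=0$ and $\Disc\left(\overline{g}^4,\overline{x_4}^4\right)=\Disc\left(\overline{g}^4,0\right)=0$ as well (the discriminant has positive homogeneity degree in the coefficients of the second form). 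The identity degenerates to $0=0$ and nothing can be read off---and this happens precisely in the case $i=4$ to which you propose to reduce. That preliminary reduction is itself shaky: the permutation-invariance corollary is deduced from Proposition~\ref{prop:changecoord}, which is stated for both degrees $\geq 2$, so it does not literally cover the pair $(g,x_i)$. The paper avoids both problems by taking $l=x_r$, $m=x_s$, $n=x_t$ with $\{x_i,x_r,x_s,x_t\}=\{x_1,x_2,x_3,x_4\}$, which handles every $i$ at once and keeps all Jacobians nondegenerate.

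Even along the correct route, the conclusion is not the ``sign and exponent bookkeeping'' you defer; it requires two precise inputs that your sketch never identifies. Theorem~\ref{thm:gendef} with the complementary coordinates gives $\Res(f,x_i,\pm\partial_t f,\pm\partial_s f)=\Disc(f,x_i)\,\Disc(f,x_i,x_r)$, the middle factor being $J(x_i,x_r,x_s,x_t)^{d(d-1)}=1$. To finish one needs: (i) the reduction $\Disc(f,x_i,x_r)=(-1)^{d}\Disc(\overline{f}^i,x_r)=\Disc(\overline{f}^{i,r})$; and (ii) the factorization $\Res\left(\overline{f}^i,\partial_s\overline{f}^i,\partial_t\overline{f}^i\right)=\Disc\left(\overline{f}^i\right)\Disc\left(\overline{f}^{i,r}\right)$ of \cite[Proposition 4.7]{BJ14}, which is the only place where $\Disc(\overline{g}^i)$ actually enters---your ``standard elimination-of-one-variable description of $\Res$'' produces the reduced resultant $\Res(\overline{f}^i,\partial_t\overline{f}^i,\partial_s\overline{f}^i)$ but not its splitting into discriminants. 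Comparing the two factorizations, both sides carry the common factor $\Disc(\overline{f}^{i,r})$, and the lemma follows only after \emph{cancelling} it, which is legitimate because one works in the universal setting where the coefficient ring is a domain and that factor is nonzero. This cancellation, not degree or sign matching, is what yields $\Disc(g,x_i)=\Disc(\overline{g}^i)$ with coefficient exactly $1$.
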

\begin{proof} By definition, it is sufficient to prove this equality in the case where $g$ is replaced by the generic homogeneous polynomial $f$ of degree $d$. We apply Theorem \ref{thm:gendef} with $l=x_r$, $m=x_s$, $n=x_t$ that are chosen so that $\{x_i,x_r,x_s,x_t\}=\{x_1,x_2,x_3,x_4\}$ as sets. We obtain the equality
\begin{equation*}
\Rc:=\Res(f,x_i,\pm \partial_t f, \pm \partial_s f)
=\Disc(f,x_i)\Disc\left(f,x_i,x_r\right).
\end{equation*}
Since the degree of $f$ and one of its partial derivative are consecutive integers, their product is always an even integer. It follows by 
standard properties of resultants that $\Rc$ does not depend on the sign of its entry polynomials, nor on their order, nor on the reduction of the variables, so that we have
$$
\Rc
=\Res(\overline{f}^i,\partial_t \overline{f}^i,\partial_s \overline{f}^i)=\Res(\overline{f}^i,\partial_s \overline{f}^i,\partial_t \overline{f}^i).
$$
Now, by property of discriminants, in particular \eqref{eq:reductionvar} and its invariance under permutation of variables \cite[Proposition 3.12]{BJ14}, we have
$$\Disc\left(f,x_i,x_r\right)=(-1)^{d}\Disc\left(\overline{f}^i,x_r\right)=\Disc\left(\overline{f}^{i,r}\right).$$
Finally, \cite[Proposition 4.7]{BJ14} shows that 
$$\Disc\left(\overline{f}^{i,r}\right) \Disc(\overline{f}^i)= \Res\left(\overline{f}^i, \partial_r \overline{f}^i, \partial_s \overline{f}^i\right)$$
and the claimed equality is proved.
\end{proof}

 \begin{prop}\label{prop:planecurve} Let $g \in R[\ux]$ be a homogeneous polynomial of degree $d \geq 2$ and $l=\sum_{i=1}^4 l_ix_i$ be a linear form in $R[\ux]$. Then, for all $i=1,\ldots,4$ we have that
	 \begin{multline*}
	 l_i^{2d(d-1)^2}\Disc(g,l) \\
	 =\Disc\left(g(l_ix_1-\delta_{i}^1l(\ux),\cdots,l_ix_4-\delta_{i}^4l(\ux))\right)
	 \end{multline*} 	
	where $\delta_i^j$ stands for the Kronecker symbol. 
 \end{prop}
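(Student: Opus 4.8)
The plan is to recognize the substitution on the right-hand side as a \emph{non-invertible} projection, and to reduce it to an honest invertible change of coordinates followed by the reduction-of-variables formula of Lemma~\ref{lem:redvar}. By Definition~\ref{def:disc} it suffices to prove the identity in the universal setting, with $g$ replaced by the generic form $f$ of degree $d$ and the $l_i$'s taken as indeterminates; both sides are then polynomials over $\ZZ[U_{1,\alpha};\, |\alpha|=d][l_1,\ldots,l_4]$ and it is enough to establish the equality there.

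The observation driving the argument is that the substitution $x_j \mapsto l_ix_j-\delta_i^j l(\ux)$ is the linear projection $\varphi$ onto the hyperplane $\{l=0\}$: one checks directly that its matrix has a zero $i$-th column, that $l\circ\varphi = 0$, and that $f\circ\varphi$ no longer involves $x_i$ (so $\Disc(f\circ\varphi)$ on the right is the plane-curve discriminant of the ternary form $f\circ\varphi$, exactly as in Lemma~\ref{lem:redvar}). I would therefore complete $\varphi$ to the invertible matrix $\varphi_1 := \varphi + \mathbf{e}_i\mathbf{e}_i^{\top}$, obtained simply by putting a $1$ in the $(i,i)$ entry, which was zero. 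A short computation, expanding along the $i$-th column, gives the three facts on which everything rests: $\det(\varphi_1)=l_i^{3}$; \ $l\circ\varphi_1 = l_ix_i$; and, since the only change is in the $x_i$-direction, $\overline{f\circ\varphi_1}^{\,i} = f\circ\varphi$.

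From here it is a bookkeeping of exponents. Applying the invariance formula of Proposition~\ref{prop:changecoord} with $(d_1,d_2)=(d,1)$, so that $D=d(d-1)^2$, gives $\Disc(f\circ\varphi_1,\, l_ix_i) = \det(\varphi_1)^{d(d-1)^2}\Disc(f,l) = l_i^{3d(d-1)^2}\Disc(f,l)$. Next I strip the scalar $l_i$ from the linear entry using the homogeneity of Proposition~\ref{prop:homdeg}: the discriminant is homogeneous of degree $\delta_2=d(d-1)^2$ in the coefficients of the degree-one argument, so $\Disc(f\circ\varphi_1,\, l_ix_i) = l_i^{d(d-1)^2}\Disc(f\circ\varphi_1,\, x_i)$, whence $\Disc(f\circ\varphi_1,\,x_i) = l_i^{2d(d-1)^2}\Disc(f,l)$. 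Finally Lemma~\ref{lem:redvar} identifies $\Disc(f\circ\varphi_1,\,x_i) = \Disc(\overline{f\circ\varphi_1}^{\,i}) = \Disc(f\circ\varphi)$, which is precisely the claimed equality.

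The main obstacle here is conceptual rather than computational: the substitution in the statement is a singular change of coordinates, so Proposition~\ref{prop:changecoord} does not apply to it directly, and the entire device is to factor it as ``invertible map $\varphi_1$, then set $x_i=0$''. A secondary technical point to check is that Proposition~\ref{prop:changecoord} stays valid when one of the two forms is linear ($d_2=1$): its proof only invokes Theorem~\ref{thm:gendef}, which requires merely $d_1+d_2\ge 3$, so the exponent $D=d_1d_2\big((e_1+e_2)^2-e_1e_2\big)=d(d-1)^2$ remains correct, but this should be spelled out since the proposition is stated for $d_i\ge 2$.
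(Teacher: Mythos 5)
Your proof is correct, and it rests on the same three pillars as the paper's own argument --- Proposition \ref{prop:changecoord}, the homogeneity of Proposition \ref{prop:homdeg}, and Lemma \ref{lem:redvar}, all applied in the universal setting where the final power of $l_i$ can be cancelled --- but your decomposition is genuinely different, and cleaner. The paper never completes the singular projection: it introduces a scaled reflection $\varphi_i$ satisfying $\varphi_i\circ\varphi_i=l_i^{2}\,\mathrm{Id}$ and $x_i\circ\varphi_i=\pm l$, computes $\Disc(g\circ\varphi_i\circ\varphi_i,l)$ in two ways (by homogeneity in the coefficients of $g$, giving $l_i^{6d(d-1)^2}\Disc(g,l)$, and by one application of invariance plus Lemma \ref{lem:redvar}, giving $l_i^{4d(d-1)^2}\Disc(\overline{g\circ\varphi_i}^{i})$), and then still needs a terminal sign flip, via homogeneity of the plane-curve discriminant, to turn $\overline{g\circ\varphi_i}^{i}$ into the stated right-hand side; along the way it must invoke the parity of $d(d-1)$ because $\det(\varphi_i)=-l_i^4$. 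Your completion $\varphi_1=\varphi+\mathbf{e}_i\mathbf{e}_i^{\top}$ buys: a single application of the invariance formula, a determinant $l_i^{3}$ that requires no parity or sign bookkeeping, the use of homogeneity in the linear slot ($\delta_2=d(d-1)^2$) instead of in the coefficients of $g$, and a right-hand side that appears verbatim as $\overline{f\circ\varphi_1}^{\,i}=f\circ\varphi$, with no final adjustment.

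On the caveat you flag: you are right that Proposition \ref{prop:changecoord} is being used with $(d_1,d_2)=(d,1)$, outside its stated hypothesis $d_i\geq 2$; note that the paper's proof does exactly the same thing (it applies the proposition to the pair $(g\circ\varphi_i,x_i)$) without comment, so your explicitness is a plus. Your justification is essentially correct but slightly incomplete: besides Theorem \ref{thm:gendef}, the proof of Proposition \ref{prop:changecoord} also invokes \cite[Proposition 3.27]{BJ14} for $\Disc(f_1,f_2,l)$, which is fine for degrees $(d,1,1)$ with $d\geq 2$, and --- when $d_2=1$ --- the conventional value $J(f_2,l,m,n)^{d_1(d_1-1)}$ in place of the resultant $\Res\left(f_1,J(f_1,l,m,n),f_2,J(f_2,l,m,n)\right)$. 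Since $f_2$ is linear, $J(f_2,l,m,n)$ is a constant and the chain rule gives $J(f_2\circ\varphi,l\circ\varphi,m\circ\varphi,n\circ\varphi)=\det(\varphi)\,J(f_2,l,m,n)$, so this conventional factor transforms with exactly the power of $\det(\varphi)$ that the generic computation predicts at $e_2=0$; the exponent $D=d(d-1)^2$ therefore survives, as you assert.
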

 
\begin{proof} We assume that we are in the generic setting, which is sufficient to prove this corollary. Consider the linear change of coordinates given by the matrix $\varphi_i$ defined as follows: its $i^\mathrm{th}$ row is the vector $(-l_1 \ -l_2 \ -l_3 \ -l_4)^T$ and its other rows are filled with zeros except on the diagonal where we put $-l_i$. Then, 
it is not hard to check that 
$$g\circ\varphi_i\circ\varphi_i(\ux)=g(l_i^2\cdot\ux)=l_i^{2d}g(\ux).$$
Therefore, by Proposition \ref{prop:homdeg} we obtain
\begin{equation}\label{eq:discplancurve}
\Disc(g\circ\varphi_i\circ\varphi_i,l)= \Disc(l_i^{2d}g,l)
=l_i^{6d(d-1)^2}\Disc(g,l).
\end{equation}
On the other hand, since $l=x_i\circ\varphi(\ux)$, Proposition \ref{prop:changecoord} yields
\begin{multline*}
	\Disc(g\circ\varphi_i\circ\varphi_i,l)=\det(\varphi_i)^{d(d-1)^2}\Disc(g\circ\varphi_i,x_i)\\
	=l_i^{4d(d-1)^2}\Disc(g\circ\varphi_i,x_i)
\end{multline*}
(notice that $d(d-1)$ is even and $\det(\varphi_i)=-l_i^4$). Then, using Lemma \ref{lem:redvar} we deduce that
$$\Disc(g\circ\varphi_i\circ\varphi_i,l)  =  l_i^{4d(d-1)^2}\Disc(\overline{g\circ\varphi_i}^i).$$
Compared with \eqref{eq:discplancurve}, this latter equality shows that
$$l_i^{2d(d-1)^2}\Disc(g,l)=\Disc(\overline{g\circ\varphi_i}^i)$$
since $l_i$ is not a zero divisor in the universal ring of coefficients. Finally, to conclude we observe that
\begin{eqnarray*}
\lefteqn{\Disc(\overline{g\circ\varphi_i}^i)}\\
&=&	\Disc\left(g(-l_ix_1,\cdots,l(\ux)-l_ix_i,\cdots,-l_ix_4)\right) \\
&=& (-1)^{3d(d-1)^2}\Disc\left(g(l_ix_1,\cdots,l_ix_i-l(\ux),\cdots,l_ix_4)\right)
\end{eqnarray*}
where the last equality follows from the homogeneity of the discriminant of a single  polynomial \cite[Proposition 4.7]{BJ14}.
\end{proof}

\subsection{A sample calculation}

In order to illustrate the gain we obtain with the new formalism we are developing, we give an explicit decomposition of the discriminant of a particular family of complete intersection space curves that are drawn on a generalized cylinder whose base is an arbitrary algebraic plane curve.

\begin{prop}\label{prop:samplecomp}
Suppose given an element $u\in R$ and two homogeneous polynomials $f, g \in R[x_1,x_2,x_3]$ of degree $d_1$ and $d_2$ respectively. If $d_1+d_2\geq 3$ then 
\begin{multline*}
	\Disc(ux_4^{d_1}+f(x_1,x_2,x_3),g(x_1,x_2,x_3)) = \\ 
(-1)^{d_1}d_1^{d_1d_2(d_1+d_2-3)} u^{d_2[(d_1+d_2-2)^2-(d_1-1)(d_2-1)]}\\ 
\cdot \Disc(f,g)^{d_1-1}\Disc(g)^{d_1}.
\end{multline*}	
\end{prop}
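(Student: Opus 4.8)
The plan is to reduce the computation to the closed formula \eqref{eqn:def} from Corollary \ref{cor:def}, since the right-hand side of that identity factors the target discriminant $\Disc(g_1,g_2)$ against quantities that are either resultants of four polynomials or discriminants of finitely many points, all of which obey well-understood multiplicativity and specialization rules. Concretely, I would set $g_1:=ux_4^{d_1}+f(x_1,x_2,x_3)$ and $g_2:=g(x_1,x_2,x_3)$ and compute each of the four factors appearing in \eqref{eqn:def}, namely $\Res(g_1,g_2,J_{1,2}(g_1,g_2),J_{2,3}(g_1,g_2))$, $(-1)^{d_1d_2}$, $\Res(g_1,\partial_2 g_1,g_2,\partial_2 g_2)$, and $\Disc(\overline{g_1}^4,\overline{g_2}^4)$, then solve for $\Disc(g_1,g_2)$. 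The key structural feature I would exploit is that $g_2$ and $f$ involve only $x_1,x_2,x_3$, so that $\partial_4 g_1=d_1 u x_4^{d_1-1}$ while $\partial_4 g_2=0$, and the Jacobian minors $J_{i,j}(g_1,g_2)$ simplify dramatically: for $i,j\in\{1,2,3\}$ they equal the corresponding minor $J_{i,j}(f,g)$ of the plane polynomials, while any minor involving the index $4$ produces a factor of $d_1 u x_4^{d_1-1}$ times $\partial_? g_2$.

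First I would handle the easy factor $\Disc(\overline{g_1}^4,\overline{g_2}^4)$. Since $\overline{g_1}^4=f(x_1,x_2,x_3)$ and $\overline{g_2}^4=g(x_1,x_2,x_3)$ are exactly the plane curves, Lemma \ref{lem:redvar} together with the two-variable reduction identity \eqref{eq:reductionvar} identifies this factor directly with $\Disc(f,g)$ (viewed in the appropriate number of variables), contributing one of the two discriminant powers in the statement. Next I would analyze $\Res(g_1,\partial_2 g_1,g_2,\partial_2 g_2)$: here $\partial_2 g_1=\partial_2 f$ and $\partial_2 g_2=\partial_2 g$ are plane polynomials, so the only dependence on $x_4$ and on $u$ sits inside $g_1$ itself through the monomial $ux_4^{d_1}$. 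Using the multiplicativity of the resultant with respect to $g_1=ux_4^{d_1}+f$ and the way a resultant picks up the leading $x_4$-coefficient when the other three entries are independent of $x_4$, I expect this resultant to factor as a power of $u$ (coming from the $x_4^{d_1}$ term) times a resultant of the plane polynomials $f,\partial_2 f,g,\partial_2 g$, which in turn is tied to $\Disc(f,g)$ and $\Disc(g)$ by the $c=n-1$ formalism and the plane-discriminant identity \cite[Proposition 4.7]{BJ14}.

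The main obstacle will be the large resultant $\Res(g_1,g_2,J_{1,2}(g_1,g_2),J_{2,3}(g_1,g_2))$, because of the mixed $x_4$-dependence induced by the monomial $ux_4^{d_1}$ and because one of the two Jacobian minors, $J_{2,3}(g_1,g_2)=J_{2,3}(f,g)$, is purely planar while the other must be chosen or transformed so that the $x_4$-structure is transparent. I would treat this resultant by peeling off the $x_4$-variable: since $g_2$, $J_{2,3}$, and the planar part of $g_1$ are independent of $x_4$ except through the single term $ux_4^{d_1}$, I would use the Macaulay/multiplicativity decomposition of a resultant in one distinguished variable (eliminating $x_4$ first) to split it into a power of $u$ times a resultant of three plane polynomials, and then recognize the latter as a product of plane resultants/discriminants via the same $c=n-1$ machinery. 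The bookkeeping of exponents — reconciling the powers of $u$, of $d_1$, and of $\det$-type constants across all four factors — is where the delicate part lies, and I would verify the final exponents $d_2[(d_1+d_2-2)^2-(d_1-1)(d_2-1)]$ on $u$ and $d_1 d_2(d_1+d_2-3)$ on $d_1$ by comparing the homogeneity degrees predicted by Proposition \ref{prop:homdeg} against the degrees collected from each factor. Once all four factors are expressed as monomials in $u$ and $d_1$ times powers of $\Disc(f,g)$ and $\Disc(g)$, solving \eqref{eqn:def} for $\Disc(g_1,g_2)$ yields the claimed formula, with the sign $(-1)^{d_1}$ emerging from the accumulated $(-1)^{d_1d_2}$ and parity factors.
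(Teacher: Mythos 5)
Your overall skeleton coincides with the paper's: start from \eqref{eqn:def} with $g_1=ux_4^{d_1}+f$ and $g_2=g$, observe that $J_{1,2}(g_1,g_2)=J_{1,2}(f,g)$ and $J_{2,3}(g_1,g_2)=J_{2,3}(f,g)$ are planar, use the Laplace formula \cite[\S 5.10]{J91} to peel off $x_4$ and extract the powers of $u$, and identify $\Disc\left(\overline{g_1}^4,\overline{g_2}^4\right)=\Disc(f,g)$. (Incidentally, your remark that one of the two minors ``must be chosen or transformed so that the $x_4$-structure is transparent'' contradicts your own earlier, correct, observation: \emph{both} minors are already planar, so Laplace applies at once and yields $\Res(g_1,g_2,J_{1,2},J_{2,3})=u^{d_2(d_1+d_2-2)^2}\Res(g,J_{1,2}(f,g),J_{2,3}(f,g))^{d_1}$.)

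There is, however, a genuine gap at the heart of your plan: the premise that each factor can be written as ``monomials in $u$ and $d_1$ times powers of $\Disc(f,g)$ and $\Disc(g)$'' is false. Laplace gives $\Res(g_1,\partial_2 g_1,g_2,\partial_2 g_2)=u^{d_2(d_1-1)(d_2-1)}\Res(g,\partial_2 f,\partial_2 g)^{d_1}$, and the ternary resultant $\Res(g,\partial_2 f,\partial_2 g)$ is \emph{not} expressible through $\Disc(f,g)$ and $\Disc(g)$; it is an independent factor. The computation only closes because this very factor reappears, to the same power $d_1$, inside the decomposition of $\mathcal{R}:=\Res(g,J_{1,2}(f,g),J_{2,3}(f,g))$, namely $\mathcal{R}=(-1)^{d_1+d_2}d_1^{d_2(d_1+d_2-3)}\Res(g,\partial_2 g,\partial_2 f)\Disc(f,g)\Disc(g)$, and hence cancels across the two sides of \eqref{eqn:def}. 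Moreover, obtaining this decomposition of $\mathcal{R}$ is itself the technical crux and does not follow from multiplicativity or the ``$c=n-1$ machinery'' alone: one must multiply the entry $J_{1,2}(f,g)$ by $x_1$, invoke the Euler-formula identity $x_1J_{1,2}(f,g)=d_1f\partial_2 g-d_2 g\partial_2 f+x_3J_{2,3}(f,g)$ to replace $x_1J_{1,2}(f,g)$ by $d_1 f\partial_2 g$ modulo $\left(g,J_{2,3}(f,g)\right)$, factor by multiplicativity, recognize $\Res(f,g,J_{2,3}(f,g))=\Disc(f,g)\Res\left(\overline{f}^1,\overline{g}^1\right)$ and $\Res(g,\partial_2 g,\partial_3 g)=\Disc(g)\Disc\left(\overline{g}^1\right)$, and finally eliminate the auxiliary factor $\Res(g,x_1,J_{2,3}(f,g))$ by a second round of the same trick (multiplying by $x_2$), so that the spurious terms $\Res\left(\overline{f}^1,\overline{g}^1\right)$ and $\Disc\left(\overline{g}^1\right)$ also cancel. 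Without this identity and these cancellations, the degree bookkeeping you propose via Proposition \ref{prop:homdeg} can at best confirm exponents of $u$ and $d_1$; it cannot produce the factorization itself.
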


\begin{proof} Because of the space limitation, we will only give the main lines to prove this formula. First, we notice that it is sufficient to assume that we are in the universal setting, that is to say to assume that the coefficients of $f,g$ and $u$ are indeterminates over the integers.
	
	Set $f_1=ux_4^{d_1}+f$ and $f_2=g$. By Corollary \ref{cor:def}, we have that 
	\begin{multline}\label{eq:ufromula1}  
		\Res(f_1,f_2, J_{1,2}, J_{2,3}) =(-1)^{d_1d_2}\Disc \left(f_1,f_2 \right) \\
		 \cdot \Res(f_1,\partial_2f_1,f_2, \partial_2f_2) \Disc(f,g).  
	\end{multline}
	Applying Laplace's formula \cite[\S 5.10]{J91}, we get 
$$\Res(f_1,\partial_2f_1, f_2, \partial_2 f_2)=u^{d_2(d_1-1)(d_2-1)} \Res(g,\partial_2f, \partial_2g)^{d_1}$$
	and substituting this equality in \eqref{eq:ufromula1}, we deduce that
	\begin{multline} \label{eq:uform5}
		\Res(f_1,f_2, J_{1,2}, J_{2,3}) = (-1)^{d_1d_2}u^{d_2(d_1-1)(d_2-1)} \\
		 \cdot \Disc \left(f_1,f_2 \right) \Res(g,\partial_2f, \partial_2g)^{d_1} \Disc(f,g).  \end{multline}
	Now, applying again Laplace's formula we get that	
	\begin{eqnarray}\label{eq:uform2}
	\lefteqn{\Res(f_1,f_2, J_{1,2}, J_{2,3})} \\ \nonumber
	&=&   u^{d_2(d_1+d_2-2)^2} \Res(g, J_{1,2} (f,g), J_{2,3}(f,g))^{d_1}.
	\end{eqnarray}
	In order to compute $\mathcal{R}:= \Res(g, J_{1,2}(f,g),J_{2,3}(f,g))$, we first observe that
	$$ \Res(g,x_1J_{1,2}(f,g), J_{2,3}(f,g))= \Res(g,x_1,J_{2,3}(f,g)) \mathcal{R}$$
	by multiplicativity of resultants. From the definition of the Jacobian determinants we have 
	$$x_1J_{1,2}(f,g)=d_1f\partial_2g-d_2g\partial_2f +x_3J_{2,3}(f,g)$$
	and we deduce that 
	\begin{eqnarray*}
	\lefteqn{  \Res(g,x_1J_{1,2}(f,g), J_{2,3}(f,g)) = \Res(g, d_1f \partial_2g, J_{2,3}(f,g))} \\
	&=& d_1^{d_2(d_1+d_2-2)}\Res( g,f, J_{2,3}(f,g)) \Res(g,\partial_2g, \partial_2f) \\
	& & \hspace{5cm} \cdot \Res(g,\partial_2g,\partial_3g). 
	\end{eqnarray*}
	But from the rule of permutation of polynomials for resultants \cite[\S 5.8]{J91} and the definition of discriminants of finitely many points \cite[Definition 3.5]{BJ14}, we have
	\begin{align*}
	\Res( g,f, J_{2,3}(f,g)) &=(-1)^{d_1+d_2}\Res(f,g,J_{2,3}(f,g))\\
	&=\Disc(f,g)\Res(\overline{f}^1,\overline{g}^1).	
	\end{align*}
	Similarly, from the rule of permutations of polynomials and the definition of discriminants of hypersurfaces \cite[Definition 4.6]{BJ14}, we have
	$\Res(g,\partial_2g,\partial_3g)=\Disc(g)\Disc(\overline{g}^1)$ and hence
	\begin{multline}\label{eq:factorR}
	\Res(g,x_1,J_{2,3}(f,g)) \mathcal{R} = (-1)^{d_1+d_2}d_1^{d_2(d_1+d_2-2)} 
	\\ \cdot \Res(g,\partial_2g, \partial_2f)
	 \Disc(f,g)\Disc(g)  \Res(\overline{f}^1,\overline{g}^1)\Disc(\overline{g}^1).
	\end{multline}
	Now, it remains to compute $\mathcal{R}'=\Res(g,x_1,J_{2,3}(f,g)).$ On the one hand we have
	\begin{equation}\label{eq:uform3}
	\Res(g,x_1,x_2J_{2,3}(f,g))= \Res(g,x_1,x_2) \mathcal{R}'.	
	\end{equation}
	On the other hand,
	\begin{multline}\label{eq:uform3b}
	\Res\left(g,x_1,x_2J_{2,3}(f,g)\right) \\
	= (-1)^{d_1d_2}\Res\left(\overline{g}^1, x_2J_{2,3}\left(\overline{f}^1, \overline{g}^1\right)\right)	
	\end{multline}
	and since $x_2J_{2,3}(\overline{f}^1,\overline{g}^1)=d_1\overline{f}^1 \partial_3\overline{g}^1 - d_2 \overline{g}^1 \partial_3\overline{f}^1$
	by the Euler formula, we deduce that
	\begin{eqnarray}\label{eq:uform4}
	\lefteqn{\Res(\overline{g}^1, x_2J_{2,3}(\overline{f}^1, \overline{g}^1))} \\ \nonumber
	&=& (-1)^{d_2}d_1^{d_2} \Res(\overline{g}^1, \overline{f}^1) \Disc(\overline{g}^1) \Res(\overline{g}^1,X_2). 
	\end{eqnarray}
	Finally, since
	$\Res(\overline{g}^1,x_2)=(-1)^{d_2}\Res(g,x_1,x_2)$ 
	the comparison of \eqref{eq:uform3}, \eqref{eq:uform3b} and \eqref{eq:uform4} shows that
	\begin{align*}
	\mathcal{R}'
		&= d_1 ^{d_2} \Disc(\overline{g}^1)  \Res(\overline{f}^1, \overline{g}^1).
	\end{align*}
	Now, coming back to the factor $\mathcal{R}$, we deduce from \eqref{eq:factorR} that
	\begin{equation*} 
		\mathcal{R}=  (-1)^{d_1+d_2} d_1^{d_2(d_1+d_2-3)} \Res(g, \partial_2 g, \partial_2 f)\Disc(f,g) \Disc(g)     
	\end{equation*}
	and hence from \eqref{eq:uform2} that
	\begin{multline*} 
	\Res(f_1,f_2, J_{1,2}, J_{2,3})	=  (-1)^{d_1(d_2-1)} d_1^{d_1d_2(d_1+d_2-3)} 
	\\ \cdot u^{d_2(d_1+d_2-2)^2}  \Res(g, \partial_2 g, \partial_2 f)^{d_1}  \Disc(f,g)^{d_1} \Disc(g)^{d_1}.     
	\end{multline*}
	Finally, we deduce from (3) that
	\begin{eqnarray*}  
	\lefteqn{(-1)^{d_1d_2}  u^{d_2(d_1-1)(d_2-1)} \Disc \left(f_1,f_2 \right)   } \\
	& & \hspace{1.5cm} \cdot \Res(g,\partial_2f, \partial_2g)^{d_1} \Disc(f,g) \\
	&=&	 (-1)^{d_1(d_2-1)} d_1^{d_1d_2(d_1+d_2-3)} u^{d_2(d_1+d_2-2)^2}  \\
	& & \hspace{1.5cm} \cdot \Res(g, \partial_2 g, \partial_2 f)^{d_1} \Disc(f,g)^{d_1} \Disc(g)^{d_1} 
	\end{eqnarray*}
and the claimed formula follows.
\end{proof}

\section{The geometric property}\label{sec:geom}

The aim of this section is to show that the discriminant $\Disc(f_1,f_2)$ defined in Definition \ref{def:disc} satisfies to the expected geometric property, namely that its vanishing corresponds to the existence of a singular point on the curve intersection of the two surfaces of equations $f_1=0$ and $f_2=0$ in $\PP^3$. We start by recalling the precise meaning of this geometric property as we will work over coefficient rings which are not necessarily fields. 

\smallskip

Let $k$ be a commutative ring. We consider the universal setting over $k$, i.e.~we suppose given two positive integers $d_1,d_2$ and we consider the (generic) homogeneous polynomials in the four variables $\ux=(x_1,x_2,x_3,x_4)$ 
$$f_1:=\sum_{|\alpha|=d_1} U_{1,\alpha}x^\alpha, \ \ f_2:=\sum_{|\alpha|=d_2} U_{2,\alpha}x^\alpha$$
that are polynomials in ${}_k C={}_kA[\ux]$, where ${}_k A=k[U_{i,\alpha}; i=1,2, |\alpha|=d_i]$ is the universal ring of coefficients over the base ring $k$. If there is no possible confusion, we will omit the subscript $k$ in the above notation.

We define the ideal $\mm=(\ux)\subset C$ generated by the variables $\ux$, the ideal $\Jc \subset C$ generated by all the 2-minors of the Jacobian matrix of $f_1$ and $f_2$, and the ideal 
$\Dc=(f_1,f_2)+\Jc \subset C$. Thus, using notation \eqref{eq:jacminors}, we have that
$$\Dc=(f_1,f_2,J_{1,2},J_{1,3},J_{1,4},J_{2,3},J_{2,4},J_{3,4})\subset C.$$
The quotient ring $B:= C/ \Dc$ is a graded ring with respect to the variables $\ux$. As such, it gives rise to the projective scheme $\Proj(B)\subset \PP^3_{A}$ that corresponds to the points $ (u_{i,\alpha})_{i,\alpha}  \times P \in \Spec(A)\times \PP^3_k$ such that the corresponding polynomials $f_1,f_2$ and all the 2-minors of their Jacobian matrix vanish simultaneously at $P$. The canonical projection of $\Proj(B)$ onto $\Spec(A)$ is a closed subscheme $\Delta$ of $\Spec(A)$ whose support is precisely what is commonly called the \emph{discriminant locus}. By definition, the defining ideal of $\Delta$ is the ideal $\P=\T_\mm(\Dc)\cap A$ where 
\begin{align*}
	\T_\mm(\Dc) &= \ker\left( C\xrightarrow{\pi} \prod_{i=1}^4 B_{x_i} \right) \subset C \\
	&= \left\{ p \in C \textrm{ such that } \exists \nu \in \NN : \mm^\nu \cdot p \subset \Dc\right\}
\end{align*}
is the so-called ideal of \emph{inertia forms} -- the notation $B_{x_i}$ denotes the localization of $B$ with respect to the variable $x_i$ and $\pi$ is the product of the canonical quotient maps.

\smallskip

In what follows, we will show that ${}_k \Disc(f_1,f_2)$, as defined by Definition \ref{def:disc}, is a generator of ${}_k \P$ if $k$ is a UFD, so that it satisfies to the expected geometric property. Before going into the details, we recall the following important and well-known result (see e.g.~\cite{Benoist,GKZ}): if $k$ is a field, then the reduced scheme of ${}_k \Delta$ is an irreducible hypersurface, i.e.~the radical of ${}_k \P$ is a principal and prime ideal, so that it is generated by an irreducible polynomial $D_k \in {}_k A$. This polynomial is not unique; it is unique up to multiplication by a nonzero element in $k$. In addition, $D_k$ is homogeneous of degree $\delta_i$ (see Proposition \ref{prop:homdeg} for the definition of $\delta_i$) with respect to the coefficients of $f_i$. 

\smallskip

We begin with some preliminary results on the Jacobian minors and the ideal $\Jc$ they generate.
\begin{lem}\label{lem:Jdet}	
	For any $j\in \{1,\ldots,4\}$ we have that
	$$\sum_{k\in \{1,\ldots,4\}, k\neq j} x_kJ_{k,j} \in (d_1f_1,d_2f_2).$$
\end{lem}
\begin{proof} Using the Euler formula, we have that 
\begin{align*}	
\left|
\begin{array}{cc}
	d_1f_1 & \partial_jf_1 \\
	d_2f_2 & \partial_jf_2 \\
\end{array}
\right|
& =
\left|
\begin{array}{cc}
	\sum_{i=1}^4x_i\partial_if_1 & \partial_jf_1 \\
	\sum_{i=1}^4x_i\partial_if_1 & \partial_jf_2 \\
\end{array}
\right| \\ 
&=\sum_{k\in \{1,\ldots,4\}, k\neq j} x_k 
\left|
\begin{array}{cc}
	\partial_k f_1 & \partial_jf_1 \\
	\partial_k f_2 & \partial_jf_2 \\
\end{array}
\right| 	
\end{align*}
and the claim follows.
\end{proof}

\begin{lem}\label{lem:ENrel} For any integer $j\in\{1,2\}$ and any triple of distinct integers $i_1,i_2,i_3$ in $\{1,2,3,4\}$ we have that 
	$$J_{i_2,i_3}.\partial_{i_1}f_j-J_{i_1,i_3}.\partial_{i_2}f_j+J_{i_1,i_2}.\partial_{i_3}f_j=0.$$
\end{lem}
\begin{proof} Develop the determinant of the 3-minor corresponding to the columns $i_1,i_2,i_3$ in the Jacobian matrix of $f_1,f_2$ and $f_j$. 
\end{proof}

\begin{lem}\label{lem:primenessJ} If $k$ is a domain, then for all $i=1,\ldots,4$ the ideal $\Jc_{x_i} \subset {}_k C_{x_i}$ is a prime ideal.
\end{lem}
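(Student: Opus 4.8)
The plan is to reduce the statement, for each fixed $i$, to the classical fact that the ideal of $2\times 2$ minors of a $2\times 4$ matrix of indeterminates is prime over any domain. Since permuting the variables $\ux$ is an automorphism of $C$ that permutes the minors $J_{k,j}$ and the localizations $C_{x_i}$ among themselves, it suffices to treat the case $i=4$.

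First I would invert $x_4$ and perform an invertible column operation on $\Jac(f_1,f_2)$. Writing $c_1,c_2,c_3,c_4$ for its columns, the Euler identity underlying Lemma~\ref{lem:Jdet} gives $x_1c_1+x_2c_2+x_3c_3+x_4c_4=(d_1f_1,d_2f_2)^{T}=:c_0$, so that the matrices $(c_1\ c_2\ c_3\ c_4)$ and $(c_1\ c_2\ c_3\ c_0)$ differ by right multiplication by a matrix of determinant $x_4$, which is invertible in $C_{x_4}$. As the ideal of maximal minors is unchanged under invertible column operations, this yields $\Jc_{x_4}=I$, where $I$ is the ideal of $2\times 2$ minors of
\[
M':=\begin{pmatrix}\partial_1 f_1 & \partial_2 f_1 & \partial_3 f_1 & d_1 f_1\\ \partial_1 f_2 & \partial_2 f_2 & \partial_3 f_2 & d_2 f_2\end{pmatrix}.
\]
The second step is to show that the eight entries $\partial_1 f_1,\partial_2 f_1,\partial_3 f_1,f_1$ and $\partial_1 f_2,\partial_2 f_2,\partial_3 f_2,f_2$ form part of a system of algebraic coordinates of $C_{x_4}$ over $k$. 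For this I would exhibit, row by row, an explicit invertible linear substitution of four coefficient variables: for $f_1$ one replaces $U_{1,e_j+(d_1-1)e_4}$ ($j=1,2,3$) together with $U_{1,d_1e_4}$ by $\partial_1 f_1,\partial_2 f_1,\partial_3 f_1$ and $f_1$, and a direct computation shows that the associated Jacobian is triangular with diagonal entries powers of $x_4$, hence a unit of $C_{x_4}$ (the analogous substitution works for the $U_{2,\beta}$). Thus $C_{x_4}$ becomes a polynomial ring over a domain in which these eight elements are among the variables, and $I$ is the extension of the generic determinantal ideal of $2\times 2$ minors of a $2\times 4$ matrix of indeterminates. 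Since that ideal is prime and the extension map $C_{x_4}$ is faithfully flat over the subring generated by the entries, primeness is preserved and $\Jc_{x_4}$ is prime.

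The main obstacle is precisely the appearance of the scalars $d_1,d_2$ in the last column of $M'$. The reduction to a \emph{generic} matrix is transparent when $d_1$ and $d_2$ are invertible in $k$, since one may then rescale the fourth column and obtain eight independent generic entries. The delicate point is the behaviour over a domain whose characteristic divides one of the $d_i$: the corresponding entry of $M'$ then vanishes and the Euler relation forces a genuine linear dependence among the entries of a row, so the matrix is no longer $1$-generic and the direct appeal to generic determinantal ideals breaks down. This is exactly where I expect the Eagon--Northcott-type syzygies recorded in Lemma~\ref{lem:ENrel} to be needed, to analyse the localized quotient $C_{x_4}/\Jc_{x_4}$ and to control its components so that the conclusion remains valid over an arbitrary domain.
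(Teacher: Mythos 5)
Your reduction to $i=4$, the Euler column operation giving $\Jc_{x_4}=I_2(M')_{x_4}$, and the triangular-substitution argument showing that $f_i,\partial_1f_i,\partial_2f_i,\partial_3f_i$ ($i=1,2$) form part of a coordinate system of $C_{x_4}$ are all correct, and together they prove the lemma when $d_1,d_2$ are units in $k$. But the difficulty you flag in your last paragraph is not a repairable technicality: it is fatal, because for the ideal of \emph{all six} minors — which is what $\Jc$ is, and what your $M'$ computes — the statement is simply false when $\mathrm{char}(k)$ divides exactly one of the degrees. Suppose $\mathrm{char}(k)=p$ divides $d_1$ but not $d_2$ (for instance $p=3$ and $(d_1,d_2)=(3,2)$). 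Then $d_2\in k^\times$, the last column of $M'$ is $(0,d_2f_2)^T$, and hence
$$\Jc_{x_4}=\bigl(J_{1,2},J_{1,3},J_{2,3}\bigr)C_{x_4}+f_2\cdot\bigl(\partial_1f_1,\partial_2f_1,\partial_3f_1\bigr)C_{x_4},$$
whose zero locus in $\Spec(C_{x_4})$ is the union of the two closed sets $V(J_{1,2},J_{1,3},J_{2,3},f_2)$ and $V(\partial_1f_1,\partial_2f_1,\partial_3f_1)$. Neither contains the other: specializing $f_1\mapsto x_1x_4^{2}$, $f_2\mapsto x_1x_4$, $\ux\mapsto(0,0,0,1)$ gives a point of the first that is not in the second, while specializing $f_1\mapsto x_4^{3}$, $f_2\mapsto x_4^{2}$, $\ux\mapsto(0,0,0,1)$ gives a point of the second that is not in the first (the row of partials of $f_1$ vanishes identically because $3=0$ in $k$, whereas $f_2=1$ there). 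A prime ideal has irreducible zero locus, so $\Jc_{x_4}$ is not prime in this case, and no analysis via the syzygies of Lemma \ref{lem:ENrel} can rescue the full-minor statement. (Note also that when the $d_i$ are nonzero but not invertible, e.g.\ $k=\ZZ$, your rescaling of the last column is not a ring isomorphism, so that intermediate case is not covered by your argument as written either.)

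The paper's proof avoids the fourth column altogether: it establishes primeness of the subideal $(J_{1,2},J_{1,3},J_{2,3})C_{x_4}$ generated by the three minors not involving $\partial_4$. Concretely, the substitution $\eta\colon U_{i,j}\mapsto-\partial_jf_i/x_4^{d_i-1}$ ($i=1,2$, $j=1,2,3$), fixing $\ux$ and all other coefficients, is an automorphism of $C[x_4^{-1}]$ with $\eta(J_{r,s})=x_4^{d_1+d_2-2}\,\UU_{r,s}$ for $r,s\in\{1,2,3\}$, so it carries this subideal onto the extension of the ideal of $2$-minors of the generic $2\times3$ matrix $\UU$, which is prime over any domain; no factor $d_i$ ever appears, so the argument is uniform in every characteristic. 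That weaker statement is also exactly what is used later: in Proposition \ref{prop:Bxi-domain} one works modulo $(f_1,f_2)$ as well, and there Lemma \ref{lem:Jdet} makes $J_{1,4},J_{2,4},J_{3,4}$ redundant once $x_4$ is inverted. (Your trouble in fact exposes an imprecision in the paper itself: the displayed identification \eqref{eq:isoJ} silently replaces $\Jc_{x_4}$ by $(J_{1,2},J_{1,3},J_{2,3})_{x_4}$, which is illegitimate whenever some $d_i\neq0$ in $k$; only the three-minor version is proved, and only it is needed.) The repair for your write-up is therefore to restate and prove the lemma for the three minors $J_{r,s}$ with $r,s\neq i$: your own coordinate substitution, restricted to the first three columns of $M'$, exhibits precisely the generic $2\times3$ determinantal ideal and concludes over an arbitrary domain.
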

\begin{proof} For simplicity, we will assume that $i=4$, the other cases being similar. In order to emphasize some particular coefficients of $f_1$ and $f_2$ we rewrite them as follows:
	$$f_i=U_{i,0}x_4^{d_i}+x_4^{d_i-1}(U_{i,1}x_1+U_{i,2}x_2+U_{i,3}x_3)+h_i, \, i=1,2.$$
We consider the $A$-algebra morphism
$$\eta:C[x_4^{-1}] \rightarrow C[x_4^{-1}] : U_{i,j} \mapsto -\partial_jf_i/x_4^{d_i-1}$$
which leaves invariant all the variables $\ux$ and all the coefficients of $f_1,f_2$, except the $U_{i,j}$'s. 
As $\eta(\partial_j f_i)=-U_{i,j}x_4^{d_i-1}$, $\eta$ is surjective. Moreover, setting 
$$\UU=\left( 
\begin{array}{ccc}
	U_{1,1} & U_{1,2} & U_{1,3} \\
	U_{2,1} & U_{2,2} & U_{2,3}
\end{array}
\right)$$
and denoting by $\UU_{r,s}$ the 2-minor of $\UU$ corresponding to the column number $r,s$, we have that
$\eta(J_{r,s})=x_4^{d_1+d_2-2}\UU_{r,s},$ $r,s \in \{1,2,3\}, r\neq s.$
Considering the map $\overline{\eta}$ induced by $\eta$, 
$$\overline{\eta} :  C[x_4^{-1}] \rightarrow C[x_4^{-1}]/\left(\UU_{1,2},\UU_{2,3},\UU_{1,3}\right),$$
we deduce that $\left(J_{1,2},J_{2,3},J_{1,3}\right).C[x_4^{-1}]\subset \ker(\overline{\eta}).$
Actually, this inclusion is an equality. Indeed, if $p \in \ker(\overline{\eta})$ then 
\begin{equation}\label{eq:inclusion}
\eta\left(p(U_{i,j})\right)=p\left(-\partial_jf_i/x_4^{d_i-1}\right) \in \left( \UU_{1,2},\UU_{2,3},\UU_{1,3} \right).
\end{equation}
But since $\eta(-\partial_jf_i/x_4^{d_i-1})=U_{i,j}$, applying again $\eta$ to \eqref{eq:inclusion} we deduce that
$$p(U_{i,j}) \in \eta(\UU_{1,2},\UU_{2,3},\UU_{2,3})=\left(J_{1,2},J_{2,3},J_{1,3}\right).C[x_4^{-1}].$$
It follows that $\overline{\eta}$ induces a graded isomorphism
\begin{equation}\label{eq:isoJ}
C_{x_4}/\Jc_{x_4}
\xrightarrow{\sim} C[x_4^{-1}]/\left(\UU_{1,2},\UU_{2,3},\UU_{1,3}\right).
\end{equation}
From here, if $k$ is a domain then the ideal generated by the 2-minors of $\UU$ is a prime ideal 
(see \cite[Theorem 2.10]{BV88}) and hence 
$C_{x_4}/\Jc_{x_4}$ is a domain. 
\end{proof}

The above lemma is the key result to deduce the following properties of the ideal of inertia forms $\T_\mm(\Dc)$.

\begin{prop}\label{prop:Bxi-domain}
	If $k$ is a domain then  $B_{x_i}$ is a domain for all $i=1,\ldots,4$.
\end{prop}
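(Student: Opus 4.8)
The plan is to combine Lemma \ref{lem:primenessJ} with an elimination of the two ``leading'' coefficients of $f_1$ and $f_2$. I treat the case $i=4$, the others being symmetric. Since $\Dc=(f_1,f_2)+\Jc$, localizing at $x_4$ gives $B_{x_4}=(C_{x_4}/\Jc_{x_4})/(\overline{f}_1,\overline{f}_2)$, where $\overline{f}_j$ denotes the class of $f_j$. By Lemma \ref{lem:primenessJ}, the morphism $\overline{\eta}$ of \eqref{eq:isoJ} is an isomorphism $C_{x_4}/\Jc_{x_4}\xrightarrow{\sim} R$, where $R:=C[x_4^{-1}]/(\UU_{1,2},\UU_{2,3},\UU_{1,3})$, sending the class of any $p$ to the class of $\eta(p)$. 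It therefore suffices to show that $R/(\eta(f_1),\eta(f_2))$ is a domain.

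First I would compute $\eta(f_i)$ in the normal form already set up in Lemma \ref{lem:primenessJ}. Writing $f_i=U_{i,0}x_4^{d_i}+x_4^{d_i-1}(U_{i,1}x_1+U_{i,2}x_2+U_{i,3}x_3)+h_i$, where $h_i$ gathers the monomials of $x_4$-degree at most $d_i-2$, the map $\eta$ fixes $x_1,\ldots,x_4$, fixes $U_{i,0}$, and fixes every coefficient occurring in $h_i$, while $\eta(U_{i,j})=-\partial_jf_i/x_4^{d_i-1}$ for $j=1,2,3$. Since the only monomial of $f_i$ involving $U_{i,0}$ is $U_{i,0}x_4^{d_i}$, and since $\eta(U_{i,j})$ involves neither $U_{1,0}$ nor $U_{2,0}$, I obtain $\eta(f_i)=U_{i,0}x_4^{d_i}+b_i$, where $b_i=h_i-\sum_{j=1}^3 x_j\partial_jf_i\in C[x_4^{-1}]$ is free of the two variables $U_{1,0}$ and $U_{2,0}$. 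The key features are that $U_{i,0}$ occurs only in the displayed leading term and that its coefficient $x_4^{d_i}$ is a unit in $R$.

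To conclude, I would exploit that the defining ideal $(\UU_{1,2},\UU_{2,3},\UU_{1,3})=I_2(\UU)$ of $R$ involves only the six entries $U_{i,1},U_{i,2},U_{i,3}$ ($i=1,2$) of $\UU$. Hence $U_{1,0}$ and $U_{2,0}$ remain free variables in $R$, so $R=S[U_{1,0},U_{2,0}]$ with $S=(A'/I_2(\UU))[\ux][x_4^{-1}]$, where $A'$ is the polynomial $k$-algebra on all the coefficients except $U_{1,0},U_{2,0}$. Because $x_4^{d_i}$ is a unit, the image of $\eta(f_i)$ generates the same ideal as $U_{i,0}+c_i$ for a suitable $c_i\in S$, whence $R/(\eta(f_1),\eta(f_2))\cong S[U_{1,0},U_{2,0}]/(U_{1,0}+c_1,U_{2,0}+c_2)\cong S$. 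Finally $S$ is a domain: since $k$ is a domain and $\UU$ is a $2\times 3$ matrix of distinct indeterminates of $A'$, the determinantal ring $A'/I_2(\UU)$ is a domain by \cite[Theorem 2.10]{BV88}, and passing to a polynomial ring in $\ux$ and localizing at $x_4$ preserve integrality. Thus $B_{x_4}\cong S$ is a domain, and the same argument at each variable yields the statement for all $i$.

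The one genuinely computational point, and the step I expect to require the most care, is establishing the normal form $\eta(f_i)=U_{i,0}x_4^{d_i}+b_i$ with $b_i$ independent of $U_{1,0},U_{2,0}$ and leading coefficient a unit. Once this is secured, the passage from $\Jc_{x_4}$ to $\Dc_{x_4}$ is merely the elimination of two free variables, which cannot disturb the domain property supplied by Lemma \ref{lem:primenessJ}.
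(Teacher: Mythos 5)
There is a genuine gap at the very first step. Your reduction $B_{x_4}\cong R/(\eta(f_1),\eta(f_2))$ requires \eqref{eq:isoJ} to hold for the \emph{full} ideal $\Jc$, i.e.\ it requires the equality $\Jc_{x_4}=(J_{1,2},J_{1,3},J_{2,3})_{x_4}$, and that equality is false. What the proof of Lemma \ref{lem:primenessJ} actually establishes is that $\ker(\overline{\eta})=(J_{1,2},J_{1,3},J_{2,3})\cdot C[x_4^{-1}]$; the remaining minors $J_{1,4},J_{2,4},J_{3,4}$ of $\Jc$ do \emph{not} lie in this ideal. Indeed, specialize over $k=\QQ$ to $f_1=x_4^{d_1}$ and $f_2=x_1x_4^{d_2-1}$: then $J_{1,2},J_{1,3},J_{2,3}$ all become identically zero, while $J_{1,4}$ becomes $-d_1x_4^{d_1+d_2-2}$, a unit of $\QQ[\ux][x_4^{-1}]$; since ideal membership is preserved under specialization, $J_{1,4}\notin(J_{1,2},J_{1,3},J_{2,3})C[x_4^{-1}]$, and in fact it is not even in the radical of that ideal. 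The minors $J_{r,4}$ become redundant only \emph{modulo} $(f_1,f_2)$: Lemma \ref{lem:Jdet} gives $x_4J_{r,4}\in(f_1,f_2,J_{1,2},J_{1,3},J_{2,3})$. So your closing assertion that ``the passage from $\Jc_{x_4}$ to $\Dc_{x_4}$ is merely the elimination of two free variables'' is precisely where the argument breaks: that passage must also absorb $J_{1,4},J_{2,4},J_{3,4}$, and this uses $f_1,f_2$ in an essential way. (The imprecision originates in the paper's own display \eqref{eq:isoJ}, whose left-hand side should read $C_{x_4}/(J_{1,2},J_{1,3},J_{2,3})_{x_4}$; but the paper's proof of the proposition never relies on the full-$\Jc$ version, whereas yours does.)

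The gap is repairable, and the repaired argument is sound: first apply Lemma \ref{lem:Jdet} to get $\Dc_{x_4}=(f_1,f_2,J_{1,2},J_{1,3},J_{2,3})_{x_4}$ (this is the paper's step \eqref{eq:choicex4}); then the corrected form of \eqref{eq:isoJ} identifies $C_{x_4}/(J_{1,2},J_{1,3},J_{2,3})_{x_4}$ with $R$, and your normal form $\eta(f_i)=U_{i,0}x_4^{d_i}+b_i$ --- which is correct, as is the observation that $b_i$ involves neither $U_{1,0}$ nor $U_{2,0}$ --- eliminates these two free variables and yields $B_{x_4}\cong S$, a domain by \cite[Theorem 2.10]{BV88}. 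At that point your proof becomes a genuine variant of the paper's: the paper performs the same two eliminations in the opposite order, first killing $U_{1,0},U_{2,0}$ via the substitution $\rho:U_{i,0}\mapsto -q_i/x_4^{d_i}$ (which is exactly what makes Lemma \ref{lem:Jdet} available at the moment it is needed) and then applying primeness of the three-minor ideal to the product $\rho(p_1)\rho(p_2)$. Your order of operations buys a stronger, structural conclusion --- $B_{x_4}$ is exhibited as a localized polynomial ring over a determinantal ring, not merely shown to be a domain --- but it is not valid without the Lemma \ref{lem:Jdet} input.
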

\begin{proof} We prove the claim for $i=4$, the other cases being similar. Let $p_1, p_2$ be two polynomials in $C$ so that $p_1p_2=0$ in $B_{x_4}$, i.e.~$p_1p_2$ belongs to the ideal $\Dc$ up to multiplication by a power of $x_4$. Using this fact and Lemma \ref{lem:Jdet}, we deduce that there exists an integer $\nu$ such that
\begin{equation}\label{eq:choicex4}
	x_4^\nu p_1p_2 \in (f_1,f_2,J_{1,2},J_{2,3},J_{1,3}).	
\end{equation}	
In order to emphasize the leading coefficients of $f_1$ and $f_2$ with respect to the variable $x_4$, we rewrite them as
$$f_i=U_{i,0}x_4^{d_i}+q_i, \ i=1,2.$$
Denote by $\overline{C}$ the polynomial ring $C$ in which the variables (coefficients) $U_{1,0}, U_{2,0}$ are removed and consider the surjective graded morphism
$$\rho:C[x_4^{-1}] \rightarrow \overline{C}[x_4^{-1}] : U_{i,0} \mapsto -q_i/x_4^{d_i}$$
which leaves invariant all the variables $\ux$ and all the coefficients of $f_1,f_2$, except $U_{1,0}, U_{2,0}$. It induces an isomorphism
$$\overline{\rho} : C_{x_4}/(f_1,f_2)_{x_4} \xrightarrow{\sim} \overline{C}[x_4^{-1}].$$
Now, by \eqref{eq:choicex4} we deduce that $\rho(p_1)\rho(p_2)$ belongs to the ideal $(J_{1,2},J_{1,3},J_{2,3}).\overline{C}[x_4^{-1}].$
Therefore, using Lemma \ref{lem:primenessJ} we deduce that either $\rho(p_1)$ or $\rho(p_2)$ belongs to this ideal, say $\rho(p_1)$. This implies that 
there exists an integer $\mu$ such that
$$x_4^\mu p_1 \in (f_1,f_2,J_{1,2},J_{2,3},J_{1,3}) \subset \Dc.$$	
In turns, this implies precisely that $p_1=0$ in $B_{x_4}$, which concludes the proof.
\end{proof}

\begin{cor}\label{cor:tf} 
	For all $i=1,\ldots,4$ we have that
	\begin{align*}
		\T_\mm(\Dc)  &= \ker\left( C\xrightarrow{\pi_i} B_{x_i} \right) \\
		&= \{ p \in C \textrm{ such that } \exists \nu \in \NN : x_i^\nu\cdot p \in \Dc\}.
	\end{align*}
Thus, both $\T_\mm(\Dc)$ and $\P$ are prime ideals if $k$ is a domain.
\end{cor}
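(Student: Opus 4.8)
The plan is to reduce the whole statement to one clean fact that is already essentially available, namely that the variables $x_{i'}$ behave like polynomial indeterminates in each localization $B_{x_i}$. First I would dispose of the second displayed equality, which is just the standard description of the kernel of a localization map: since $B_{x_i}=(C/\Dc)_{x_i}$, an element $p\in C$ maps to $0$ in $B_{x_i}$ precisely when $x_i^\nu p\in\Dc$ for some $\nu\in\NN$. This identifies $\ker(\pi_i)$ with $\{p\in C : \exists\,\nu, \ x_i^\nu p\in\Dc\}$ with no further work. For the first displayed equality I would start from the very definition $\T_\mm(\Dc)=\ker\bigl(C\to\prod_{i=1}^4 B_{x_i}\bigr)=\bigcap_{i=1}^4\ker(\pi_i)$, so that the inclusion $\T_\mm(\Dc)\subseteq\ker(\pi_i)$ is automatic for every $i$. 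The real content is the reverse inclusion, and I would obtain it by showing that the four ideals $\ker(\pi_1),\dots,\ker(\pi_4)$ all coincide, which forces each of them to equal their intersection $\T_\mm(\Dc)$.

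To compare two charts I would fix $i\neq i'$ and take $p\in\ker(\pi_{i'})$, so that $x_{i'}^{\nu}p\in\Dc$, i.e.\ $x_{i'}^{\nu}p=0$ in $B_{x_i}$. The key claim is that $x_{i'}$ is a \emph{nonzerodivisor} in $B_{x_i}$; granting this, $x_{i'}^{\nu}$ is again a nonzerodivisor, so $x_{i'}^{\nu}p=0$ gives $p=0$ in $B_{x_i}$, that is $p\in\ker(\pi_i)$. By symmetry $\ker(\pi_i)=\ker(\pi_{i'})$, and running over all pairs yields the desired equalities. Note that, stated with ``nonzerodivisor'' rather than merely ``nonzero'', this argument needs no hypothesis on $k$, which is exactly what is required since the two displayed equalities in the corollary are asserted for an arbitrary commutative ring $k$.

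The step I expect to be the main obstacle is precisely this nonzerodivisor claim, and I would prove it by revisiting the explicit description of $B_{x_i}$ built in the proof of Proposition \ref{prop:Bxi-domain}. Treating, say, $i=4$: after inverting $x_4$ one uses Lemma \ref{lem:Jdet} (and Lemma \ref{lem:ENrel}) to eliminate the minors $J_{1,4},J_{2,4},J_{3,4}$ modulo $(f_1,f_2)$, and then the coefficient substitution of Lemma \ref{lem:primenessJ} — which fixes all the variables $\ux$ and only alters the coefficients $U_{i,j}$ — identifies $B_{x_4}$, up to invertible factors, with a quotient of $\overline C[x_4^{-1}]$ by the ideal of $2\times2$ minors of a matrix of \emph{coefficients}. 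Since those minors involve none of the $x_j$, the resulting ring is a polynomial ring in $x_1,x_2,x_3$ (with $x_4$ inverted) over a $k$-algebra of coefficients, and in such a ring every variable is a nonzerodivisor over any base $k$. Hence each $x_{i'}$ is a nonzerodivisor in $B_{x_i}$, as needed.

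Finally, the primality assertions are formal consequences. When $k$ is a domain, Proposition \ref{prop:Bxi-domain} gives that $B_{x_i}$ is a domain, so $\T_\mm(\Dc)=\ker(\pi_i)$ is the kernel of a ring homomorphism into a domain and is therefore prime; and $\P=\T_\mm(\Dc)\cap A$ is prime, being the contraction of a prime ideal along the inclusion $A\hookrightarrow C$.
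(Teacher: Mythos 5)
Your proof is correct, and its skeleton coincides with the paper's: the paper likewise reduces everything to the fact that $x_{i'}$ is not a zero divisor in $B_{x_i}$ for $i'\neq i$ (phrased there as injectivity of the canonical maps $B_{x_i}\to B_{x_ix_{i'}}$), from which the equality of all the kernels, hence of each with their intersection $\T_\mm(\Dc)$, and then the primality statements follow exactly as you argue. The genuine difference is how that key nonzerodivisor fact is justified. The paper invokes Proposition \ref{prop:Bxi-domain} and states that the proof of \cite[Corollary 3.21]{BJ14} ``applies verbatim''; since that route passes through $B_{x_i}$ being a domain, it tacitly carries the hypothesis that $k$ is a domain, although the two displayed equalities are asserted for an arbitrary commutative base ring. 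Your argument removes this dependence: combining Lemma \ref{lem:Jdet} (which reduces $\Dc\cdot C[x_4^{-1}]$ to $(f_1,f_2,J_{1,2},J_{1,3},J_{2,3})$), the substitution $\rho$ from the proof of Proposition \ref{prop:Bxi-domain} and the substitution $\eta$ from the proof of Lemma \ref{lem:primenessJ} --- both of which fix the variables $\ux$ --- you exhibit $B_{x_4}$ as $\left(k[\textrm{coefficients}]/(\UU_{1,2},\UU_{1,3},\UU_{2,3})\right)[x_1,x_2,x_3][x_4^{-1}]$, a localized polynomial ring over a determinantal coefficient ring, in which every $x_{i'}$ is visibly a nonzerodivisor over any base (and $x_4$ is a unit). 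This makes the corollary self-contained and proves the equalities in the stated generality, with the domain hypothesis entering only where it should, namely for primality of $\ker(\pi_i)$ and of its contraction $\P$ along $A\hookrightarrow C$. Two minor remarks: Lemma \ref{lem:ENrel} is not actually needed to eliminate the minors $J_{r,4}$ (Lemma \ref{lem:Jdet} alone suffices, as in the paper's own proof of Proposition \ref{prop:Bxi-domain}); and for the primality conclusion one should note that $B_{x_i}\neq 0$, so that $\ker(\pi_i)$ is a proper ideal.
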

\begin{proof} Using Proposition \ref{prop:Bxi-domain}, the proof of \cite[Corollary 3.21]{BJ14} applies verbatim to show that $x_i$ is not a zero divisor in $B_{x_j}$ for all $i\neq j$. From here, we deduce that the canonical maps $B_{x_i}\rightarrow B_{x_ix_j}$, $i\neq j$, are all injectives maps and hence the claimed equalities follow. 
\end{proof}

We are now ready to prove the main result of this section.

\begin{thm} If $k$ is a UFD then ${}_k\Disc(f_1,f_2)$ is a generator of ${}_k \P$. It is hence an irreducible polynomial in ${}_k A$.
\end{thm}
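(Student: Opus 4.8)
The goal is to show that when $k$ is a UFD, the universal discriminant ${}_k\Disc(f_1,f_2)$ generates the prime ideal ${}_k\P = \T_\mm(\Dc)\cap A$. The plan is to combine three ingredients already assembled in the excerpt: (i) the explicit factorization formula of Corollary~\ref{cor:def}, which expresses a certain resultant $\Rc=\Res(g_1,g_2,J_{1,2},J_{2,3})$ as a product involving $\Disc(f_1,f_2)$; (ii) the fact, recalled just before the theorem, that over any \emph{field} $k$ the radical of ${}_k\P$ is principal and prime, generated by an irreducible polynomial $D_k$ of the known bidegree $(\delta_1,\delta_2)$; and (iii) Corollary~\ref{cor:tf}, which guarantees that ${}_k\P$ is itself prime (not merely its radical) whenever $k$ is a domain.

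**The two inclusions.** First I would show $\Disc(f_1,f_2)\in\P$. The resultant $\Rc$ is an inertia form for the ideal $(f_1,f_2,J_{1,2},J_{2,3})\subset\Dc$ by the inertia-form characterization of resultants used in the first step of Theorem~\ref{thm:gendef}'s proof, hence $\Rc\in\T_\mm(\Dc)\cap A=\P$; since $\P$ is prime (Corollary~\ref{cor:tf}) and $\Rc$ factors through $\Disc(f_1,f_2)$ times other terms, I would argue the discriminant factor itself lies in $\P$ — one must check that the \emph{other} factors $\Res(g_1,\partial_2 g_1,g_2,\partial_2 g_2)$ and $\Disc(\overline{g_1}^4,\overline{g_2}^4)$ are \emph{not} in $\P$, so that primeness forces $\Disc(f_1,f_2)\in\P$. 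This non-membership follows because those factors are genuinely different geometric conditions (degeneracy of a hyperplane section, or a resultant-type condition) and a specialization argument, analogous to the $\eta$-specialization in Theorem~\ref{thm:gendef}, shows they survive generically off the discriminant locus.

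**The reverse inclusion and irreducibility.** For $\P\subseteq(\Disc(f_1,f_2))$, the strategy is to pass to the fraction field $K=\mathrm{Frac}(k)$, where the classical result gives $\sqrt{{}_K\P}=(D_K)$ with $D_K$ irreducible of bidegree $(\delta_1,\delta_2)$. By Proposition~\ref{prop:homdeg} the universal discriminant has exactly this bidegree, and by the membership just proved $D_K\mid{}_K\Disc(f_1,f_2)$ up to a scalar; comparing degrees forces ${}_K\Disc(f_1,f_2)=c\cdot D_K$ for some $c\in K^\times$. Thus over $K$ the discriminant generates the radical, and since ${}_K\P$ is prime hence radical (Corollary~\ref{cor:tf}), it generates ${}_K\P$ itself. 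The descent from $K$ back to the UFD $k$ is where the UFD hypothesis is essential: I would use that ${}_k\Disc(f_1,f_2)$ is a primitive polynomial (its coefficient content is a unit, verifiable by exhibiting a specialization over $\ZZ$ where the discriminant is $\pm1$ or by Gauss's lemma on the defining factorization) so that a generator over $K$ which lies in ${}_kA$ and is primitive must already generate ${}_k\P\cap{}_kA$ as a $k$-ideal. Irreducibility in ${}_kA$ then follows from irreducibility over $K$ together with primitivity, again via Gauss's lemma.

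**Main obstacle.** The delicate point is the descent from the fraction field to $k$: the classical irreducibility statement is only available over a field, so controlling the content and ensuring that no extraneous prime of $k$ divides the generator (which is exactly where a non-UFD could fail) is the crux. I expect the hard part to be proving that ${}_k\Disc(f_1,f_2)$ is primitive over $k$ and that $\P$ acquires no embedded or extra components under base change, so that the equality $\P=(\Disc)$ — and not merely $\sqrt{\P}=(\Disc)$ — holds integrally; here the primeness of $\P$ from Corollary~\ref{cor:tf} does the essential work by ruling out multiplicities.
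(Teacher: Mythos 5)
Your first half (showing ${}_k\Disc(f_1,f_2)\in{}_k\P$ via the inertia-form property of the resultant, the primeness of ${}_k\P$ from Corollary~\ref{cor:tf}, and a degree comparison with $D_K$ over a field) follows the paper's first half, and your plan for the converse inclusion ${}_k\P\subseteq({}_k\Disc(f_1,f_2))$ --- flat passage to $K=\mathrm{Frac}(k)$ followed by a Gauss-lemma descent --- is a genuinely different and \emph{a priori} attractive route. But that route stands or falls with the primitivity of ${}_k\Disc(f_1,f_2)$ (equivalently, of ${}_\ZZ\Disc(f_1,f_2)$), and this is exactly the point you do not prove. Your first suggested verification, a specialization over $\ZZ$ where the discriminant equals $\pm1$, is in fact \emph{impossible} for every degree pair except $(d_1,d_2)=(1,2)$ or $(2,1)$: such a specialization would be a complete intersection curve in $\PP^3_\ZZ$ whose reduction is smooth at every prime, i.e.\ a smooth proper curve over $\Spec(\ZZ)$ of genus $\tfrac12 d_1d_2(d_1+d_2-4)+1\geq 1$, whose Jacobian would be a nonzero abelian scheme over $\ZZ$; these do not exist (Fontaine, Abrashkin --- for genus one this is the classical nonexistence of elliptic curves over $\QQ$ with everywhere good reduction). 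Your second suggestion, ``Gauss's lemma on the defining factorization'' \eqref{eqn:def}, only transfers primitivity from the left-hand side $\Res(g_1,g_2,J_{1,2},J_{2,3})$ to its factors; since the entries $J_{1,2},J_{2,3}$ are non-generic specializations, the primitivity of that resultant is not among the cited known facts and is essentially as hard as what you want to prove. Note also that the primeness of ${}_k\P$, which you invoke at the end as doing ``the essential work,'' says nothing about the content of ${}_k\Disc(f_1,f_2)$, so it cannot close this gap.

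The paper is engineered precisely to avoid needing this primitivity as an input. For $p\in{}_\ZZ\P$ it proves the divisibility ${}_\ZZ\Disc(f_1,f_2)\mid p^N$ with $N=d_1d_2(e_1+e_2)$ by an explicit resultant computation (Lemmas~\ref{lem:Jdet} and~\ref{lem:ENrel}, then the divisibility property of resultants applied to $\Res(f_1,f_2,J_{1,2},x_4^\nu p\,\partial_2 f_1)$), and in that computation primitivity is needed only for the \emph{auxiliary} factors $\Disc\left(\overline{f_1}^4,\overline{f_2}^4\right)$, $\Res\left(\overline{f_1}^{3,4},\overline{f_2}^{3,4}\right)$ and $\Res(f_1,f_2,\partial_1f_1,\partial_2f_1)$, which is either a known fact or is checked by a direct specialization. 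This gives equality of the radicals of ${}_k\P$ and $({}_k\Disc(f_1,f_2))$; the UFD hypothesis, primeness, and the field case then force ${}_k\Disc(f_1,f_2)=c\cdot P_k^r$ with $r=1$. Primitivity and irreducibility of ${}_\ZZ\Disc(f_1,f_2)$ come out as \emph{corollaries} of the theorem, not as inputs. A smaller point: your argument that the cofactors $\Res(g_1,\partial_2g_1,g_2,\partial_2g_2)$ and $\Disc\left(\overline{g_1}^4,\overline{g_2}^4\right)$ do not lie in ${}_k\P$ rests on an appeal to ``different geometric conditions''; the paper's argument here is purely numerical --- their degrees in the coefficients of $f_1$ are strictly smaller than $\delta_1$ (Proposition~\ref{prop:homdeg}), so the irreducible generator $D_\KK$ of ${}_\KK\P$ cannot divide them --- and you could have borrowed it verbatim.
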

\begin{proof} 
First, let $\KK$ be a field. From the geometric property we recalled previously, we know that the radical of ${}_\KK \P$ is generated by an irreducible  polynomial $D_\KK$. Using Corollary \ref{cor:tf}, we deduce that  $D_\KK$ is actually a generator of ${}_\KK \P$.	
	
Now, assume that $k$ is a domain and take again the notation of Theorem \ref{thm:gendef}. The resultant
$$\Res\left(f_1,f_2, J(f_1,f_2,l,m), J(f_1,f_2,l,n)\right)$$ 
is an inertia form of its four input polynomials and hence, by developing the Jacobian determinants, we see that it belongs to ${}_k\P\cdot {}_kA'$.
Therefore, Theorem \ref{thm:gendef} shows that
	\begin{multline}\label{eq:DinP1}
		\Res\left(f_1,J(f_1,l,m,n)),f_2,J(f_2,l,m,n)\right)\\
		\cdot \Disc(f_1,f_2)\Disc\left({f_1},{f_2},l\right) \in {}_k\P\cdot {}_kA'.
	\end{multline}
We claim that $\Disc(f_1,f_2,l)$ does not belong to ${}_k\P$. Indeed, assume the contrary. By extension to the fraction field $\KK$ of $k$, we would have that the square-free part of ${}_\KK\Disc(f_1,f_2,l)$ belongs to the prime ideal ${}_\KK \P$. But ${}_\KK \P$ is generated by $D_\KK$ so we get a contradiction since ${}_\KK\Disc(f_1,f_2,l)$ is homogeneous of degree $d_2(e_1+2e_2)<\delta_1$ with respect to the coefficients of $f_1$, and similarly with respect to the coefficients of $f_2$ \cite[Proposition 3.9]{BJ14}.  With a similar argument, we also get that the resultant 
$\Res\left(f_1,J(f_1,l,m,n),f_2,J(f_2,l,m,n)\right)$
does not belong to ${}_k \P$ since it is homogeneous of degree $d_2(2e_1e_2+e_2)$ with respect to the coefficients of $f_1$, and similarly with respect to the coefficients of $f_2$. Finally, as ${}_k \P$ is a prime ideal we deduce from \eqref{eq:DinP1} that ${}_k\Disc(f_1,f_2)$ belongs to ${}_k \P$ (recall that $k$ is here assumed to be a domain). 

As a first consequence, since ${}_k \Disc(f_1,f_2)$ and $D_k$ are homogeneous polynomials of the same degree with respect to the coefficients of each $f_i$, we conclude that this theorem is proved if $k$ is a assumed to be a field.

\smallskip	

Let $p\in{}_\ZZ \P$ be an inertia form and set $N:=d_1d_2(e_1+e_2)$. Our next aim is to show that	
${}_\ZZ\Disc(f_1,f_2)$ divides $p^{N}$. For that purpose, using the definition of inertia forms and Lemma \ref{lem:Jdet}, we deduce that there exists an integer $\nu$ such that
	$x_4^\nu \cdot p \in \left(f_1,f_2,J_{1,2},J_{2,3},J_{1,3}\right).$
Then, Lemma \ref{lem:ENrel} shows that $\partial_2f_i\cdot J_{1,3} \in \left(J_{1,2},J_{2,3}\right)$ so that we get that
	$$x_4^\nu\cdot p \cdot \partial_2f_i \in \left(f_1,f_2,J_{1,2},J_{2,3}\right).$$
Now, by the divisibility property of resultants \cite[\S 5.6]{J91} we obtain that $\Res(f_1,f_2,J_{1,2},J_{2,3})$ divides
$$\Rc:=\Res(f_1,f_2,J_{1,2},x_4^\nu\cdot p \cdot \partial_2f_i).$$
Applying computational rules of resultants and choosing $i=1$ we get that
\begin{multline*}
\Rc=p^{N}\Res\left(\overline{f_1}^4,\overline{f_2}^4,\overline{J_{1,2}}^4\right)^\nu\\
\cdot \Res(f_1,\partial_2f_1,f_2,\partial_2f_2)\Res(f_1,f_2,\partial_1f_1,\partial_2f_1)
\end{multline*}
where, in addition,
$$\Res\left(\overline{f_1}^4,\overline{f_2}^4,\overline{J_{1,2}}^4\right)=\Disc\left(\overline{f_1}^{4},\overline{f_2}^{4}\right)\Res\left(\overline{f_1}^{3,4},\overline{f_2}^{3,4}\right)$$
by the definition of discriminants of finitely many points \cite[Definition 3.5]{BJ14}. Combining the above equalities and using \eqref{eqn:def}, we deduce that $\Disc(f_1,f_2)$ divides the product
\begin{multline}\label{eq:divdiscx4}
p^{N}\Disc\left(\overline{f_1}^{4},\overline{f_2}^{4}\right)^{\nu-1}\Res\left(\overline{f_1}^{3,4},\overline{f_2}^{3,4}\right)^\nu\\
\cdot \Res(f_1,f_2,\partial_1f_1,\partial_2f_1).	
\end{multline}
With a similar degree inspection as above and after extension to $\QQ$, we deduce that ${}_\QQ \Disc(f_1,f_2)$, which is an irreducible polynomial, cannot divide the discriminant and the two resultants in \eqref{eq:divdiscx4}. Then, we claim that the discriminant and the two resultants in \eqref{eq:divdiscx4} are primitive polynomials. This is a known property for the first two ones. For the third one, namely $\Res(f_1,f_2,\partial_1f_1,\partial_2f_1)$, we argue by specialization: for instance,  
\begin{align*}
	\Res(f_1,Ux_4^{d_2},\partial_1f_1,\partial_2f_1) 
	&= U^{d_1e_1^2}\Res\left(\overline{f_1}^4,\overline{\partial_1f_1}^4,\overline{\partial_2f_1}^4\right)^{d_2}\\
	&=U^{d_1e_1^2} \Disc\left(\overline{f_1}^4\right)^{d_2}\Disc\left(\overline{f_1}^{3,4}\right)^{d_2}
\end{align*}
is a primitive polynomial since both discriminants on the right side are known to be primitive polynomials. Therefore, we conclude that ${}_\ZZ \Disc(f_1,f_2)$ 
divides $p^{N}$.

\smallskip

Finally, from what we proved we deduce that ${}_k \P$ and the ideal generated by ${}_k \Disc(f_1,f_2)$ have the same radicals. Since we assume that $k$ is a UFD, ${}_k \P$ is prime and we deduce that there exist an irreducible polynomial $P_k$, an invertible element $c \in k$ and a positive integer $r$ such that ${}_k\Disc(f_1,f_2)=c\cdot P_k^r$. By extension to $\KK$ we deduce immediately that $r=1$, which concludes the proof.
\end{proof}

The above theorem shows that ${}_\ZZ \Disc(f_1,f_2)$ is a primitive and irreducible polynomial in ${}_kA$. It also shows that the discriminant formula we gave provides an \emph{effective} smoothness criterion (as the criterion in \cite[p.~3]{Dem} that applies verbatim).

\section{Acknowledgments}

Part of this work was done while the second author was visiting IMSP at Benin, supported by the DAAD. Both authors warmly thank the ICTP for its hospitality and are very grateful to Alicia Dickenstein, Marie-Fran\c{c}oise Roy and Fernando Rodrigues Villegas for their continued support.

{\small
\def\cprime{$'$}

}

\end{document}